\newtheorem{theorem}{Theorem}
\newtheorem{lemma}[theorem]{Lemma}
\theoremstyle{definition}
\newtheorem{definition}[theorem]{Definition}
\newtheorem{defn-lem}[theorem]{Definition/Lemma}
\theoremstyle{remark}
\newcommand{\R}{{\mathbb{R}}}
\newcommand{\NZ}{\mathbb{N}\cup\{0\}}
\renewcommand{\phi}{\varphi}
\newcommand{\lra}[1]{\left\langle{#1}\right\rangle}
\newcommand{\lrp}[1]{\left({#1}\right)}
\newcommand{\lrb}[1]{\left[{#1}\right]}
\newcommand{\lrs}[1]{\left\{{#1}\right\}}
\newcommand{\lrm}[1]{\left|{#1}\right|}
\newcommand{\lrceil}[1]{\left\lceil{#1}\right\rceil}
\newcommand{\wv}{w}
\newcommand{\we}{p}
\newcommand{\weh}{p^H}
\newcommand{\kap}{k}
\newcommand{\tx}{\tilde{x}}
\newcommand{\ty}{\tilde{y}}
\newcommand{\tm}{\tilde{m}}
\newcommand{\tf}{\tilde{f}}
\newcommand{\trho}{\tilde{\rho}}
\newcommand{\talpha}{\tilde{\alpha}}
\newcommand{\hy}{\hat{y}}
\newcommand{\OP}{\operatorname{OPT}}
\title{\Large Weighted Partition Vertex and Edge Cover}
   \author{Rajni Dabas\thanks{Dept. of Computer Science, Northwestern University, Evanston IL 60208.}
    \and Samir Khuller\footnotemark[1]
    \and Emilie Rivkin\footnotemark[1]}
\date{}
\begin{document}

\title{Capacitated Partition Vertex Cover and Partition Edge Cover}
%
%

%
%
\maketitle              

\begin{abstract}
We study generalizations of the classical Vertex Cover and Edge Cover problems that incorporate group-wise coverage constraints. 

Our first focus is the Capacitated Partition Vertex Cover (C-PVC) problem in hypergraphs. In C-PVC, we are given a hypergraph with capacities on its vertices and a partition of the hyperedge set into $\omega$ distinct groups. The objective is to select a minimum size subset of vertices that satisfies two main conditions: (1) in each group, the total number of covered hyperedges meets a specified threshold, and (2) the number of hyperedges assigned to any vertex respects its capacity constraint. A covered hyperedge is required to be assigned to a selected vertex that belongs to the hyperedge.
This formulation generalizes classical Vertex Cover, Partial Vertex Cover, and Partition Vertex Cover. 

We investigate two primary variants: soft capacitated (multiple copies of a vertex are allowed) and hard capacitated (each vertex can be chosen at most once). Let $f$ denote the rank of the hypergraph. Our main contributions are: $(i)$ an $(f+1)$-approximation algorithm for the weighted soft-capacitated C-PVC problem, which runs in $n^{O(\omega)}$ time, and $(ii)$ an $(f+\epsilon)$-approximation algorithm for the unweighted hard-capacitated C-PVC problem, which runs in $n^{O(\omega/\epsilon)}$ time. 

We also study a natural generalization of the edge cover problem, the \emph{Weighted Partition Edge Cover} (W-PEC) problem, where each edge has an associated weight, and the vertex set is partitioned into groups. For each group, the goal is to cover at least a specified number of vertices using incident edges, while minimizing the total weight of the selected edges. We present the first exact polynomial-time algorithm for the weighted case, improving runtime from \( O(\omega n^3) \) to \( O(mn+n^2 \log  n) \) and simplifying the algorithmic structure over prior unweighted approaches. 

\end{abstract}

\section{Introduction}

The classical Vertex Cover (VC) problem in graphs is defined as follows: given a graph $G=(V,E)$ with weights on the vertices, find a minimum size subset of vertices $S$ so that every edge of the graph is covered, in other words, at least one of its end points is in $S$. This extremely simple problem has a rich history. It was one of the earliest problems shown to be NP-complete by Karp \cite{Karp-reducibility-72} and was shown to have a polynomial time solution when the graph is bipartite. Gavril \cite{gavril-72} showed that a simple algorithm that takes all the vertices of any maximal matching is a 2-approximation (because the cardinality of any maximal matching is a lower bound on the optimal solution size). Subsequently, this simple algorithm was extended to weighted \footnote{every vertex has a weight and the aim is to find a minimum weighted subset of vertices} graphs as well, but this required significantly deeper understanding of the role played by LP relaxations \cite{hochbaum-VC-82} and finally a simple combinatorial algorithm was developed based on LP duality \cite{bar-yehuda-even-81}. This problem also paved the way for improvements and the well known local ratio technique \cite{bar-yehuda-even-local-83} that found numerous applications. In addition, approximation algorithms easily follow from the work of Nemhauser and Trotter who showed that an optimal LP solution has a very simple structure \cite{nemhauser-trotter-75}.

Several extensions of VC have been studied. One major direction introduces capacities on vertices, leading to the \emph{Capacitated Vertex Cover} (CVC) problem. Here, each selected vertex may cover only a limited number of incident edges. In this problem, a key distinction arises between {\em soft capacities} (where multiple copies of a vertex can be purchased) and {\em hard capacities} (where each vertex is chosen at most once).  For soft capacities, Guha et al.\cite{guha-capVC-03} first introduced the problem and provided a 2-approximation using a primal-dual algorithm; another 2-approximation using dependent randomized rounding was later given by Gandhi et al. \cite{kortsarz-hardcapVC-06}. Notably, these results often hold for the weighted setting where vertices have costs. The hard capacity variant (VC-HC) presents significantly more challenges. Chuzhoy and Naor \cite{chuzhoy-hardcapVC-06} initiated its study, giving a 3-approximation for the unweighted version on simple graphs (where edges have unit demand). Crucially, they also showed that the weighted version of VC-HC is at least as hard as Set Cover.  Focusing on the unweighted VC-HC, Gandhi et al. \cite{gandhi-PVC-01} improved the approximation to a tight 2. For hypergraphs, Saha and Khuller \cite{SahaK12-vc} provided an $O(f)$-approximation (specifically $\min\{6f, 65\}$). Cheung, Goemans, and Wong \cite{CheungGW14-vc} then improved the bounds to $2f$ using a deterministic two-stage LP rounding approach. The gap was finally closed to a tight $f$-approximation for unweighted VC-HC on $f$-hypergraphs, achieved independently by Kao \cite{Kao17-vc} and Wong \cite{Wong17-vc}, using iterative rounding techniques.

A separate parallel line of work relaxes the requirement to cover all edges. \emph{Partial Vertex Cover} asks to cover at least a specified number of edges~\cite{bshouty-burroughs-98,bar-yehuda-PVC-99,gandhi-PVC-01}. A further generalization is \emph{Partition Vertex Cover} (PVC), where the edge set is partitioned into groups, and the solution must cover a specified minimum number of edges from each group. PVC was introduced by Bera et al.~\cite{partition-VC}, who obtained an \(O(\log \omega)\)-approximation via a strengthened LP with exponentially many constraints. For constant \(\omega\), Bandyapadhyay et al.~\cite{colorful-vertex-edge} gave a \((2+\epsilon)\)-approximation in time \(n^{O(n/\epsilon)}\), later improved to the weighted case with running time \(n^{O(\omega)}\) independently by Liu and Li~\cite{LiuLi-COCOON2025} and Dabas et al. \cite{DabasKR-vcec}.

Combining both dimensions—vertex capacities and partition-based coverage—yields richer and significantly more challenging models. Partial Capacitated VC has been investigated\footnote{For hard capacities, Shiau et al \Cite{KaoSLL19-pvc} give an \(f\)-approximation using iterative rounding. For soft capacities, Bar-Yehuda et al \Cite{Bar-YehudaFMR10} give an \(f\)-approximation using the local ratio method and Mestre \Cite{Mestre09-VC} gives a 2-approximation using a primal-dual algorithm.}, yet the full generalization that simultaneously incorporates partition constraints and capacities has remained open. In this paper, we close this gap by introducing the \emph{Capacitated Partition Vertex Cover} (C-PVC) problem, which unifies the two prominent extensions. We study the problem in the general setting of hypergraphs.

\begin{definition}[Capacitated Partition Vertex Cover (C-PVC)]
\label{def:cpvc}
We are given a hypergraph $H = (V, E)$, where $f = \max_{e \in E} |e|$ is the rank of the hypergraph. We are also given, a vertex weight function $w: V \to \R_{\ge 0}$, a vertex capacity function $k: V \to \mathbb{Z}^+$, a partition of the hyperedge set $E = E_1 \cup E_2 \cup \dots \cup E_\omega$, and a coverage requirement (threshold) $\rho_g \in \mathbb{Z}^+$ for each group $E_g$. 

We denote the capacity of a vertex $v$ by $k_v$ and its weight by $w_v$. A feasible solution consists of a set of integer copy counts $\{x_v\}_{v \in V}$ (where $x_v \in \mathbb{Z}_{\ge 0}$) and an assignment $\phi: E' \to V$ for a subset of covered edges $E' \subseteq E$, satisfying:
\begin{enumerate}
    \item \textbf{Valid Assignment:} For every edge $e \in E'$, the assigned vertex must be in the edge, i.e., $\phi(e) \in e$.
    \item \textbf{Capacity Constraint:} The number of edges assigned to any vertex $v$ is at most its total capacity:
    $$|\{e \in E' \mid \phi(e) = v\}| \le x_v \cdot k_v \quad \forall v \in V$$
    \item \textbf{Partition Constraint:} For each group $i \in [\omega]$, the number of covered edges from that group meets its threshold:
    $$|E' \cap E_i| \ge \rho_g \quad \forall g \in [\omega]$$
\end{enumerate}
The objective is to find a feasible solution that minimizes $\sum_{v \in V} x_v \cdot w_v$. We refer to this as the soft-capacitated variant. The hard-capacitated variant adds the constraint that $x_v \in \{0, 1\}$ for all $v \in V$.
\end{definition}




In this work, we present two algorithms for the C-PVC problem. First, we provide an $(f+1)$-approximation algorithm for the soft-capacitated weighted PVC problem, which runs in $n^{O(\omega)}$ time. In particular, we achieve \Cref{thm:CVC1}. The algorithm first solves an LP relaxation to define responsibility assignment that concentrates each edge’s fractional coverage on a single endpoint. We then solve a sparse LP in which at most \(\omega\) variables remain fractional which can be rounded up safely for the right guess of \(\omega\) most expensive vertices in the optimal solution. 

\begin{restatable}{theorem}{thmsc}
\label{thm:CVC1}
There exists a $(f+1)$-factor approximation algorithm for the \emph{Soft Capacitated Weighted Partition Vertex Cover} problem that runs in time \ \(n^{O\lrp{\omega}}\).
\end{restatable}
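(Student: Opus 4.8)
The plan is to combine a guessing step with a two-stage linear-programming relaxation, charging the final cost against $\OP$. Fix an optimal integral solution and let $T$ be the set of its (at most) $\omega$ purchased vertices of largest weight. First I would enumerate all $n^{O(\omega)}$ candidate sets of size at most $\omega$ for $T$; for the correct guess I buy these vertices outright, paying $\mathrm{cost}(T)\le\OP$, and then delete every vertex whose weight exceeds $w':=\min_{v\in T}w_v$, since no such vertex can be purchased by the optimum outside $T$. This leaves a residual instance in which every remaining purchasable vertex is ``cheap'' (weight at most $w'$), a property I will rely on in the rounding. The enumeration is what produces the $n^{O(\omega)}$ running time.

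On the residual instance I set up the natural relaxation with copy variables $x_v\ge 0$, assignment variables $y_{e,v}\ge 0$ for $v\in e$, and coverage variables $z_e\in[0,1]$, under the constraints $\sum_{v\in e}y_{e,v}\ge z_e$, the capacities $\sum_{e\ni v}y_{e,v}\le k_vx_v$, and the partition bounds $\sum_{e\in E_g}z_e\ge\rho_g$. \textbf{Stage 1 (responsibility).} After solving this LP I designate, for each edge $e$, the responsible endpoint $\phi^*(e)=\arg\max_{v\in e}y_{e,v}$. Because $|e|\le f$, this endpoint already carries at least a $1/f$ fraction of the coverage, i.e.\ $y_{e,\phi^*(e)}\ge z_e/f$, so collapsing each edge entirely onto $\phi^*(e)$ inflates every vertex's fractional load by at most a factor $f$. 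This is precisely where the multiplicative $f$ in the guarantee comes from.

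\textbf{Stage 2 (sparse LP and rounding).} With $\phi^*$ fixed, the objective becomes additive over edges, $\sum_e c_e z_e$ with $c_e=w_{\phi^*(e)}/k_{\phi^*(e)}$, and the only remaining coupling constraints are the $\omega$ partition inequalities. Since the groups partition the edge set, this coverage LP decomposes into $\omega$ independent problems --- for each group, buy the cheapest $\rho_g$ units of coverage --- each of whose extreme points has at most one fractional $z_e$. Hence a basic optimum has at most $\omega$ fractional variables. I then round every fractional $z_e$ up to $1$, recover an integral assignment $\phi$ of the covered edges, and set each copy count to $x_v=\lceil |\phi^{-1}(v)|/k_v\rceil$.

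The cost accounting splits the $(f+1)$ factor into the Stage-1 concentration (factor $f$ on the relaxation value, hence $\le f\cdot\OP$) and the rounding overhead. Rounding up the at most $\omega$ fractional coverage variables forces at most $\omega$ extra vertex copies, and since each such vertex is cheap this adds at most $\omega\,w'\le\mathrm{cost}(T)\le\OP$, supplying the additive $+1$. The step I expect to be the main obstacle is controlling the integrality (ceiling) loss on the copy counts: naively, $\lceil|\phi^{-1}(v)|/k_v\rceil$ can cost up to an extra $w_v$ for every active vertex, so the whole argument hinges on showing that the sparse-LP structure together with the guess confines this overhead to the $\omega$ rounded groups and the cheap vertices, rather than incurring a ``$+1$ per vertex'' blow-up. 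Verifying that this charging closes at exactly $f+1$ (and that the simultaneous capacity and partition constraints remain feasible after rounding) is the technical heart of the proof.
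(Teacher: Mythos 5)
Your overall architecture matches the paper's: guess the $\omega$ heaviest optimal vertices (hence $n^{O(\omega)}$ time), forbid anything heavier, solve an LP, concentrate each edge onto the endpoint $\arg\max_{v\in e}y_{e,v}$ at a multiplicative loss of $f$, then exploit a sparse second LP with at most $\omega$ fractional coordinates whose round-up is charged to the guessed set. However, there is a genuine gap exactly where you flag it, and it is not a loose end that closes by itself: your Stage-2 LP is written over edge-coverage variables $z_e$ with linearized cost $c_e=w_{\phi^*(e)}/k_{\phi^*(e)}$, and the copy counts are only recovered at the very end as $x_v=\lceil|\phi^{-1}(v)|/k_v\rceil$. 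That final ceiling is applied to \emph{every} active vertex, not just the $\le\omega$ vertices touched by fractional $z_e$'s; a vertex with $k_v=10$ and $11$ integrally covered edges is paid $1.1$ copies by your LP but needs $2$, and summing this over all active vertices gives an additive loss of up to $\sum_v w_v$, which is not bounded by any function of $\OP$. Nothing in your sparse-LP structure confines this loss to the $\omega$ rounded groups.

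The paper avoids this by never letting the copy counts leave the LP. After the concentration step it defines $\tx_v=\frac{1}{k_v}\sum_{e\in\delta(v)}\ty_{e,v}$, splits $\tx_v=\tm_v+\tf_v$ into integer and fractional parts, commits to the integer parts, and then writes the sparse LP \emph{over the vertex variables} $\alpha_v\in[0,1]$ (the replacements for the fractional parts), subject to the $\omega$ residual group-coverage constraints and one budget constraint $\sum_v w_v\alpha_v\le\sum_v w_v\tf_v$. A basic solution then has at most $\omega$ fractional $\alpha_v$'s, and rounding only those up to $1$ costs at most $\omega\cdot w_{v_1'}\le\sum_{v\in V_\omega}w_v\le\OP$; no per-vertex ceiling ever appears, and an integral edge assignment is recovered afterward by a max-flow argument. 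You should also tighten your accounting: as written you pay $\mathrm{cost}(T)\le\OP$ once to buy the guessed set and then invoke $\omega w'\le\mathrm{cost}(T)\le\OP$ a second time for the rounding overhead, which naively yields $f+2$; the paper keeps the guessed vertices inside the LP via constraints $x_{v'}\ge1$, so their cost is absorbed into the $f\cdot\OP'_{LP}$ term and only the rounding overhead contributes the additive $+1$.
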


Our second algorithm is an $(f+\epsilon)$-approximation algorithm for the hard-capacitated (unweighted) PVC problem, which runs in $n^{O(\omega/\epsilon)}$ time for a fixed $\epsilon>0$. The algorithm uses an iterative rounding  algorithm to fix vertices and assignments with large fractional values in each iteration. A final enumeration handles the remaining small solution space. This structure matches the known tight techniques for hard-capacitated hypergraph VC \cite{Wong17-vc} but the core technical difficulty stems from integrating the $\omega$ partition constraints into the iterative rounding framework. The algorithm in Wong \cite{Wong17-vc} crucially rely on a \emph{counting argument} applied to the final LP's extreme point solution. This argument bounds the cost of rounding the remaining "small" fractional vertices by bounding their total number. However, introducing $\omega$ new partition constraints fundamentally alters the structure of this extreme point, as these $\omega$ constraints can also be tight. This breaks the standard counting argument. Our key technical contribution is a modified iterative algorithm that demonstrates that the number of remaining fractional variables is bounded by a function of $\omega$. We then manage this remaining set via a final enumeration step, yielding an $(f+\epsilon)$-approximation algorithm whose runtime depends exponentially on $\omega$.

\begin{restatable}{theorem}{thmhc}
   \label{thm:CVC2}
For any fixed \( \epsilon > 0 \), there exists an $(f+\epsilon)$-factor approximation algorithm for the \emph{Hard Capacitated Unweighted Partition Vertex Cover} problem that runs in time \ \(n^{O\lrp{\omega/\epsilon}}\). 
\end{restatable}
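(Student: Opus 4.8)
The plan is to prove \Cref{thm:CVC2} in three phases: an LP relaxation, an enumeration (``guessing'') phase that neutralizes the $\omega$ partition constraints by fixing a bounded set of heavy vertices, and an iterative-rounding phase modeled on Wong's $f$-approximation for hard-capacitated hypergraph vertex cover~\cite{Wong17-vc}. I would start from the natural relaxation with selection variables $x_v\in[0,1]$ and assignment variables $y_{e,v}\ge 0$ for $v\in e$, writing $z_e:=\sum_{v\in e}y_{e,v}$ for the fractional coverage of edge $e$. The constraints are the assignment bounds $y_{e,v}\le x_v$, the capacity bounds $\sum_{e\ni v}y_{e,v}\le k_v x_v$, the coverage bounds $z_e\le 1$, and the $\omega$ partition requirements $\sum_{e\in E_g}z_e\ge\rho_g$, with objective $\min\sum_v x_v$ (the unit weights reflecting the unweighted setting).

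In the first phase I would guess, for each of the $\omega$ groups, the $O(1/\epsilon)$ vertices of a fixed optimal solution that cover the most edges of that group---a total of $O(\omega/\epsilon)$ vertices---fix their $x$-values to $1$, commit a feasible assignment of their capacities, and decrement each threshold $\rho_g$ by the coverage so obtained; iterating over all $n^{O(\omega/\epsilon)}$ guesses accounts for the stated running time. The purpose is that once the heavy contributors are removed, every residual partition requirement is met by many vertices each carrying only a small fractional coverage, so the later rounding cannot destroy a constant fraction of any threshold. In the second phase I would run iterative rounding on the residual LP: repeatedly compute an extreme point, fix to $1$ every vertex with $x_v\ge 1/f$ (greedily committing its capacity to the edges with largest $y_{e,v}$), delete the now-satisfied constraints, and re-solve. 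Charging each rounded vertex's unit cost to its fractional value gives total cost at most $f$ times the residual LP optimum; combined with the guessed vertices, whose cost is controlled so as to contribute at most an additional $\epsilon\cdot\OP$, this yields the claimed $(f+\epsilon)$ guarantee.

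The main obstacle, and the technical heart of the argument, is to show that the iterative phase always terminates on an instance small enough to enumerate, i.e.\ that whenever an extreme point has no vertex with $x_v\ge 1/f$, the number of fractional variables it supports is bounded by a function of $\omega$ and $\epsilon$. In Wong's setting a token/counting argument on the extreme point guarantees that a vertex of value at least $1/f$ persists until the solution is integral. Here the $\omega$ partition inequalities, when tight, supply $\omega$ additional linearly independent equalities that the extreme point may use to pin down extra fractional variables, all possibly below $1/f$, which breaks that guarantee. I would adapt the token distribution so that every tight capacity, assignment, and coverage constraint is charged exactly as before, proving that the only uncharged fractional variables are those attached to the at most $\omega$ tight partition constraints, and a careful count then bounds their number by $O(\omega/\epsilon)$. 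These residual variables define an instance that the final enumeration resolves, completing the proof.
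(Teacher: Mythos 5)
Your LP relaxation, the iterative-rounding skeleton, and the diagnosis of the key obstacle (the $\omega$ partition inequalities can all be tight at an extreme point and break the standard counting argument, so one must show the number of surviving small fractional variables is bounded in terms of $\omega$) all match the paper's proof, which establishes exactly the bound $|W|\le |U_1|+\omega$ by a rank count on the final extreme point. The gap is in how you convert the resulting \emph{additive} loss into the multiplicative $\epsilon$. Rounding up the $O(\omega)$ leftover fractional vertices costs an additive $\omega$ on top of $f\cdot\OP_{LP}$, and this is only at most $\epsilon\cdot\OP$ if one separately certifies that $\OP>\omega/\epsilon$. The paper obtains this certificate by brute-force enumerating \emph{all} candidate solutions of size $1,\dots,\lceil\omega/\epsilon\rceil$ (this is what the $n^{O(\omega/\epsilon)}$ time is spent on): if any is feasible the problem is solved exactly, and otherwise $\OP>\omega/\epsilon$ so the additive $\omega$ is absorbed. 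Your enumeration instead guesses, per group, the $O(1/\epsilon)$ heaviest contributors of a fixed optimal solution; this never establishes a lower bound on $\OP$, so nothing prevents $\OP=O(\omega)$ while the additive loss remains $\Theta(\omega)$, destroying the $(f+\epsilon)$ guarantee. The stated purpose of your guessing step --- that after removing heavy contributors ``the later rounding cannot destroy a constant fraction of any threshold'' --- is also unsubstantiated: the residual LP's extreme point is not constrained by the structure of the integral optimum, and in the correct analysis feasibility of the partition constraints is preserved simply because the final rounding only increases capacities, not because coverage is spread thin.

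A second, related soft spot is your closing sentence that the $O(\omega/\epsilon)$ residual fractional variables ``define an instance that the final enumeration resolves.'' No such second enumeration is described, and it is unclear what it would enumerate or why its output would be both feasible and cheap; in the paper these residual variables are simply rounded up to $1$, their count is charged partly to $f\cdot\OP_{final}$ (via $|U_>|+|U_1|\le f\sum_{u\in U_>}x_u^*$) and partly to the additive $\omega$, and the enumeration lives entirely \emph{before} the rounding algorithm as the certificate that $\OP$ is large. To repair your write-up you should replace the per-group guessing with (or add) the exhaustive search over all vertex subsets of size at most $\lceil\omega/\epsilon\rceil$, and replace the vague final enumeration with the explicit round-up plus the charging argument above.
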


While the vertex cover problem focuses on selecting a subset of vertices to cover all edges, a natural dual problem is the \emph{edge cover} problem. In this variant, the goal is to select a subset of edges such that every vertex is incident to at least one selected edge. Although structurally similar to vertex cover, the edge cover problem has very different algorithmic properties—it can be solved in polynomial time via reductions to maximum matching—and it arises in applications such as network design and resource allocation.

In this paper, we study a natural generalization of the edge cover problem called the \emph{Partition Edge Cover} (Partition-EC) problem:

\begin{definition}[Weighted Partition-EC]
We are given a graph \(G=\lrp{V,E}\) with a weight function \(\we:E\to\R_{\geq0}\), a partition of the vertex set \( V = V_1 \cup \dots \cup V_\omega \), and a parameter \( \rho_g \) for each group. We want to find a minimum-weight subset of edges such that at least \( \rho_g \) vertices from group \( V_g \) are covered.
\end{definition}

Bandyapadhyay et al.~\cite{bandyapadhyay_constant_2019} studied this problem in the unweighted setting and gave an exact polynomial-time algorithm with a runtime of \( O(\omega n^3) \). Their approach first reduces the problem to \emph{Budgeted Matching}, which is then reduced to \emph{Tropical Matching}, allowing the use of known algorithms for the latter to obtain an optimal solution.

In this work, we present a simpler algorithm with improved running time compared to that of Bandyapadhyay et al.~\cite{bandyapadhyay_constant_2019}. Our algorithm avoids the use of tropical matching and runs in time \({O}(mn + n^2 \log n) \). Moreover, our approach naturally extends to the more general \emph{weighted} setting, where edges have arbitrary non-negative weights. In contrast, it is not clear whether the tropical matching-based approach of~\cite{bandyapadhyay_constant_2019} can be extended to handle weighted instances. Our main result is summarized below.

\begin{theorem}
\label{thm:EC}
The Weighted Partition Edge Cover problem can be solved exactly in \( O(mn+n^2 \log n) \) time.
\end{theorem}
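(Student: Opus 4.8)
The plan is to reduce Weighted Partition Edge Cover to a single weighted matching (equivalently, a min-cost flow solved by successive shortest paths), mirroring the classical reduction from minimum-weight edge cover to matching but extended to handle partial, group-wise coverage. First I would establish the structural fact that some optimal solution is a \emph{star forest}: given any feasible edge set, a minimal subset covering the same vertex set contains no path on three edges, so every connected component is a star. Fixing such a solution, let $C$ denote its covered vertices; the constraint becomes $|C\cap V_g|\ge\rho_g$, equivalently at most $\beta_g:=|V_g|-\rho_g$ vertices of $V_g$ are left uncovered.

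Next I would set up the savings reformulation. For each vertex $v$ let $c(v)$ be the weight of its cheapest incident edge, and for each edge $(u,v)$ let $b(u,v)=c(u)+c(v)-\we(u,v)$ (discarding edges with $b<0$). Orienting each star at its center and pairing the center with one designated leaf yields a matching on $C$ together with "extra pendant" leaves; charging each matched pair $\we$ and each pendant its base cost $c(\cdot)$ shows that the minimum weight to cover a fixed set $C$ equals $\sum_{v\in C}c(v)-\nu_b(G[C])$, where $\nu_b$ denotes the maximum-weight matching under the weights $b$. Writing $U=V\setminus C$ and using $\sum_{v\in C}c(v)=\sum_v c(v)-\sum_{v\in U}c(v)$, the whole problem becomes: maximize $\sum_{v\in U}c(v)+\nu_b(G\setminus U)$ over uncovered sets $U$ with $|U\cap V_g|\le\beta_g$, and return $\sum_v c(v)$ minus this value. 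Intuitively we collect a reward $c(v)$ for skipping $v$ and a saving $b$ for each shared edge, subject to the per-group skip budgets.

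I would then encode this maximization as one weighted-matching / min-cost-flow instance on a sparse auxiliary network: keep the original edges with weights $b$, and attach to each group $g$ a capacity-$\beta_g$ "skip" gadget so that matching a vertex $v\in V_g$ into the gadget (reward $c(v)$) corresponds to leaving $v$ uncovered, with at most $\beta_g$ such vertices per group. A maximum-weight matching in this network simultaneously chooses the skipped vertices and the sharing pairs, and is computed by $O(n)$ shortest-path augmentations, giving the target running time $O(mn+n^2\log n)$; the final edge set is read off by taking, for each matched pair its shared edge and for each remaining covered vertex its cheapest incident edge.

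The step I expect to be the main obstacle is the correctness of the savings formula under \emph{partial} coverage, where two things can go wrong: a pendant's cheapest edge may point to a vertex we intended to leave uncovered, and the matching optimum must be shown to be realizable as a genuinely feasible edge cover of the claimed weight. I would resolve this with an exchange argument—whenever a pendant's cheapest edge lands on an uncovered vertex $w$, add $w$ to $C$ (its coverage comes for free from that very edge, only relaxing the group constraints and never increasing the cost), iterating until every pendant attains its base cost $c(v)$—thereby proving both inequalities between the matching optimum and the true optimum. A secondary technical point is keeping the skip gadget sparse enough (so the auxiliary network has $O(m+n)$ edges) to meet the $O(mn+n^2\log n)$ bound rather than a looser $O(n^3)$, together with the routine handling of degenerate vertices (isolated vertices, where $c(v)=\infty$) and infeasible instances.
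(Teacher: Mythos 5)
Your proposal is essentially the paper's approach: both reduce the problem to a single maximum-weight matching computation by (i) letting each vertex default to its cheapest incident edge $c(v)$, (ii) letting a matched original edge cover both endpoints at a saving, and (iii) allowing each group $V_g$ to leave at most $\beta_g=|V_g|-\rho_g$ vertices uncovered via a per-group gadget; your savings weights $b(u,v)=c(u)+c(v)-p(u,v)$ are just the big-$M$-free form of the paper's $2M-p_e$ construction, and your exchange argument for pendants whose cheapest edge lands on an intended-uncovered vertex plays the same role as the paper's feasibility and optimality lemmas. The one point you flag but leave unresolved is exactly where the claimed running time lives: realizing the capacity-$\beta_g$ skip gadget as $\beta_g$ separate auxiliary nodes joined to all of $V_g$ introduces $\Theta(n^2)$ edges and yields only $O(n^3)$ via general weighted matching; the paper instead contracts each auxiliary set to a single node with degree bound $\beta_g$ and solves the resulting degree-constrained subgraph ($f$-factor) problem with Gabow's algorithm, which is what gives $O(mn+n^2\log n)$. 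Note also that since the graph is non-bipartite, the computation cannot literally be a min-cost flow solved by successive shortest paths; it must be a blossom-based weighted matching or $f$-factor algorithm, though the $O(n)$-augmentation intuition and the resulting time bound are the same.
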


\textbf{Related Work:} The closely related \(K\)-center problem has also been well studied in the partition framework. The first algorithm was proposed by Bandyapadhyay et al.~\cite{bandyapadhyay_constant_2019}, providing a polynomial time 2-approximation while allowing \(k+\omega\) centers. A subsequent result by Anegg et al.~\cite{anegg_technique_2022} eliminated the violation in \(k\) and obtained a 4-approximation in time \(O(n^\omega)\), which was later improved to a 3-approximation in time \(O(n^{\omega^2})\) by Jia et al.~\cite{jia_fair_2022}. The same framework for facility location and \(k\)-median objectives has also been studied; see, for example, Dabas et al. \cite{dabas2025flofair}, Bajpai et. al. \cite{BajpaiCK25-fl}, and the references therein.

\textbf{Organization of the paper:} 
The remainder of the paper is organized as follows. Sections~\ref{sec:sc-pvc} and~\ref{sec:hc-pvc} present the ($f+1$)-approximation and $(f+\epsilon)$- approximation algorithms for soft and hard capacitated PVC, respectively.  In Section~\ref{sec:w-pec}, we describe the algorithm for W-PEC.

\section{Soft-capacitated Weighted Partition Vertex Cover}\label{sec:sc-pvc}

This algorithm generalizes the guessing technique used by Gandhi et al \cite{gandhi-PVC-01}. Partial covering problems have the added complication of the last vertex (of each group) added to the solution being difficult to bound, and guessing the \(\omega\) most expensive vertices allows us to upper bound the weight of these problem vertices by the weight of the optimal solution.\footnote{If the number of distinct vertices of the optimal solution is smaller than \(\omega\), we can restrict the linear program to allow only the \(\omega\) guessed vertices to take nonzero integer values. Lenstra's result on mixed integer programming \cite{lenstra-mixed-integer} shows that this program can be solved optimally in polynomial time.} 

We start by presenting the integer problem \ref{ip:sc}:

\begin{align*}
    &\text{minimize} &\sum_{v\in V}\wv_v x_v & \tag{IP-SC}\label{ip:sc} \\
    &\text{subject to} &\sum_{v\in e}y_{e,v}\leq 1&\qquad\forall e\in E \\
    & &\sum_{e\in E_g}\lrp{\sum_{v\in e}y_{e,v}}\geq\rho_g &\qquad\forall g\in\lrs{1,\ldots,\omega} \\
    & & \sum_{e\in\delta\lrp{v}} y_{e,v}\leq \kap_vx_v & \qquad\forall v\in V \\
    & & y_{e,v} \leq x_v &\qquad\forall v\in e\in E \\
    & &y_{e,v}\in \lrs{0,1} &\qquad\forall v\in e\in E \\
    & &x_v \in \NZ &\qquad\forall v\in V
\end{align*}

For the first step of the algorithm, we need to modify the instance by guessing the set of the most expensive vertices in the optimal solution. We will guess all size-\(\omega\) subsets of \(V\), of which there are \(O\lrp{n^\omega}\). For some guess, say that we select the set \(V_\omega=\lrs{v'_1,\ldots,v'_\omega}\) with \(\wv_{v'_1}\leq\cdots\leq \wv_{v'_\omega}\). Define \(V_\infty=\lrs{v\in V\setminus V_\omega:w_v>w_{v'_1}}\).  We create a modified instance of the problem by adjusting the linear program:

\begin{align*}
    &\text{minimize} &\sum_{v\in V}\wv_v x_v& \tag{LP-M}\label{lp:m}\\
    &\text{subject to} &\sum_{v\in e}y_{e,v}\leq 1&\qquad\forall e\in E \\
    & &\sum_{e\in E_g}\lrp{\sum_{v\in e}y_{e,v}}\geq\rho_g &\qquad\forall g\in\lrs{1,\ldots,\omega} \\
    & & \sum_{e\in\delta\lrp{v}} y_{e,v}\leq \kap_vx_v & \qquad\forall v\in V \\
    & & y_{e,v} \leq x_v &\qquad\forall v\in e\in E \\
    & & x_{v'}\geq 1 &\qquad\forall v'\in V_\omega \\
    & & x_v = 0 &\qquad \forall v\in V_\infty \\
    & &0\leq y_{e,v}\leq1 &\qquad\forall v\in e\in E
\end{align*}

As in the result on hard capacitated vertex cover by Cheung et al \cite{CheungGW14-vc}, we only need \(x\) to be integral in a feasible solution \(\lra{x,y}\) in order to find a fully integral solution.

\begin{lemma}\label{lem:flow}
Let \(\lra{x,y}\) be a feasible solution to \ref{lp:m} such that \(x\) is integral. In polynomial time, we can compute an integral solution \(\hy\) such that \(\lra{x,\hy}\) is feasible.
\end{lemma}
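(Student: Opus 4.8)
The plan is to fix the integral vector $x$ and reduce the construction of $\hat y$ to a single integral maximum-flow computation. Once $x$ is fixed, the coupling constraint $y_{e,v}\le x_v$ plays only one role: it forbids assigning an edge to a vertex with $x_v=0$; for any $v$ with $x_v\ge 1$ it is already implied by $y_{e,v}\le 1$. Every other constraint on $y$ is either an upper bound (the per-edge bound $\sum_{v\in e}y_{e,v}\le 1$ and the capacity bound $\sum_{e\in\delta(v)}y_{e,v}\le k_v x_v$, whose right-hand side is now a fixed integer) or one of the per-group lower bounds $\sum_{e\in E_g}\sum_{v\in e}y_{e,v}\ge\rho_g$. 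This is exactly the shape of a flow feasibility problem, so I would phrase it as one.

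Concretely, I would build a network with a source $s$, one node $g$ for each of the $\omega$ groups, one node $u_e$ for each edge $e$, one node $w_v$ for each vertex $v$, and a sink $t$. Add arcs $s\to g$ of capacity $\rho_g$; arcs $g\to u_e$ of capacity $1$ for each $e\in E_g$; arcs $u_e\to w_v$ of capacity $1$ for each $v\in e$ with $x_v\ge 1$; and arcs $w_v\to t$ of capacity $k_v x_v$. Because the $E_g$ partition $E$, each $u_e$ has a unique group predecessor, so the network is well defined, and all capacities are integers since $x$, the $\rho_g$, and the $k_v$ are. An integral $s$-$t$ flow that saturates every source arc then corresponds precisely to values $\hat y_{e,v}\in\{0,1\}$ meeting all the constraints: conservation at $u_e$ forces $\sum_v\hat y_{e,v}\le 1$, the arc into $t$ forces the capacity bound, saturation of $s\to g$ gives coverage exactly $\rho_g\ge\rho_g$, and the absence of $w_v$ for $x_v=0$ enforces $\hat y_{e,v}\le x_v$.

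It then remains to exhibit such a saturating flow, and here I would use the given fractional $y$. Routing $y_{e,v}$ on $u_e\to w_v$, $\sum_{v\in e}y_{e,v}$ on $g\to u_e$, and $\sum_{e\in\delta(v)}y_{e,v}$ on $w_v\to t$ defines a feasible fractional flow whose value on each $s\to g$ equals the group coverage, which is at least $\rho_g$. Reducing the $y_{e,v}$ of any overcovered group until its coverage drops to exactly $\rho_g$ only lowers the left-hand sides of the upper-bound constraints, so feasibility is preserved and we obtain a feasible fractional flow of value $\sum_g\rho_g$. Hence the maximum flow value is at least $\sum_g\rho_g$, and since the total source capacity is exactly $\sum_g\rho_g$, it equals $\sum_g\rho_g$ and every maximum flow saturates all source arcs. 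By the integrality theorem for flows with integral capacities, an integral maximum flow exists and is computable in polynomial time; reading off its values on the arcs $u_e\to w_v$ yields the required integral $\hat y$ with $\langle x,\hat y\rangle$ feasible.

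The main obstacle is that the partition requirements are lower bounds, which do not fit the upper-bound structure underlying flow integrality. The key move is to recast each lower bound as the demand ``saturate the capacity-$\rho_g$ arc $s\to g$,'' using the observation that covering exactly $\rho_g$ edges per group is enough; the fractional solution then certifies that the maximum flow meets this demand, so integrality supplies the integral assignment at no extra cost.
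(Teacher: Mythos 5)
Your proof is correct, and it rests on the same core technique as the paper's: build a layered flow network (source, edges, vertices, sink) with the integral capacities induced by $x$, use the fractional $y$ to certify the required flow value, and invoke flow integrality to extract $\hat y$ in polynomial time. The one genuine difference is how the partition lower bounds are encoded. The paper routes \emph{all} $|E|$ units of flow, one per edge, and adds an outlier node $O_g$ per group that absorbs the uncovered edges of $E_g$; the coverage requirement is enforced by capping the arc $(O_g,t)$ so that at most $|E_g|-\rho_g$ edges of group $g$ can escape to the outlier (the capacity is stated as $\rho_g$ in the paper, which only makes sense read as $|E_g|-\rho_g$). You instead insert a group layer between the source and the edge nodes, give the arc $s\to g$ capacity $\rho_g$, and demand that all source arcs be saturated, so exactly $\rho_g$ edges per group are routed to vertices and the rest are simply dropped. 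The two encodings are equivalent; yours avoids the outlier gadget and makes the lower bound a saturation condition at the source, at the cost of needing the small extra step of trimming the fractional witness flow down to value exactly $\sum_g\rho_g$ before appealing to max-flow integrality. You handle that step correctly, and your observation that $y_{e,v}\le x_v$ with $x$ integral means the fractional witness never uses a missing arc $u_e\to w_v$ closes the only other point that needed checking.
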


\begin{proof}
We generalize the flow construction of \Cite{CheungGW14-vc}. Consider the input instance and a solution \(\lra{x,y}\) with \(x\) integral. We construct a network flow instance \(F\). The nodes of \(F\) are \(s\cup E\cup\lrp{V\cup\lrs{O_1,\ldots,O_\omega}}\cup t\). The source and sink are \(s\) and \(t\), respectively. The nodes \(O_1,\ldots,O_\omega\) correspond to the outlier sets for each group. We define the arcs and capacities:
\begin{itemize}
\item \(\lrp{s,e}\) for each \(e\in E\) with capacity 1;
\item \(\lrp{e,v}\) for each \(v\in e\) with capacity 1;
\item \(\lrp{e,O_g}\) where \(e\in E_g\) with capacity 1;
\item \(\lrp{v,t}\) for each \(v\in V\) with capacity \(k_vx_v\);
\item \(\lrp{O_g,t}\) for each \(g=1,\ldots,\omega\) with capacity \(\rho_g\).
\end{itemize}

We know that \(\lra{x,y}\) is feasible, so there must be a flow of \(\lrm{E}\) in \(F\) from \(s\) to \(t\). The capacity on each arc is integral, so there exists and integral flow of value \(\lrm{E}\) from \(s\) to \(t\). Define \(\hy_{e,v}=1\) if the flow \(\lrp{e,v}\) is 1 and \(\hy_{e,v}=0\) otherwise. Then, \(\lra{x,\hy}\) is a feasible integral solution.
\end{proof}

For the correct guess \(V_\omega\), we can guarantee that \(V_\omega\) is a subset of the vertices chosen in the optimal solution. In particular, we guessed the \(\omega\) most expensive vertices, so we can disallow any vertex more expensive than the least expensive guessed vertex. This gives us the following lemma:

\begin{lemma}\label{lem:op}
For the correct guess \(V_\omega\), let \(\OP'_{LP}\) and \(\OP'_I\) be the fractional and integral optimal solutions to \ref{lp:m}, respectively. Let \(\OP\) be the optimal solution to \ref{ip:sc}. Then, \(\OP'_{LP} \leq \OP'_I = \OP\).
\end{lemma}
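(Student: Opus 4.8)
The plan is to split the claimed chain $\OP'_{LP} \leq \OP'_I = \OP$ into two parts. The first inequality $\OP'_{LP} \leq \OP'_I$ is the routine relaxation bound: any integral point feasible for \ref{lp:m} is in particular a fractional feasible point of \ref{lp:m}, so minimizing the same objective over the larger fractional feasible region can only decrease the optimum. The substantive content is the identity $\OP'_I = \OP$, which I would establish by a two-sided sandwich argument, proving each quantity bounds the other.

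For the direction $\OP'_I \geq \OP$, I would observe that \ref{lp:m} is obtained from \ref{ip:sc} by relaxing integrality and then \emph{adding} the two families of guessing constraints $x_{v'} \geq 1$ for $v' \in V_\omega$ and $x_v = 0$ for $v \in V_\infty$. Consequently, every integral point feasible for \ref{lp:m} is integral and satisfies the four shared constraints, hence satisfies all constraints of \ref{ip:sc}; the guessing constraints are extra restrictions rather than relaxations and do not affect feasibility for \ref{ip:sc}. Thus the integral feasible region of \ref{lp:m} is contained in that of \ref{ip:sc}, and minimizing the common objective over the smaller region gives $\OP'_I \geq \OP$.

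The reverse direction $\OP'_I \leq \OP$ is where the meaning of the \emph{correct guess} enters, and this is the step requiring the most care. Let $\langle x^*, y^* \rangle$ be an optimal integral solution to \ref{ip:sc}, and suppose $V_\omega$ is exactly the set of the $\omega$ most expensive distinct vertices in the support $\{v : x^*_v \geq 1\}$, ordered so that $w_{v'_1} \leq \cdots \leq w_{v'_\omega}$. I would verify that $\langle x^*, y^* \rangle$ satisfies both guessing constraints and is therefore integrally feasible for \ref{lp:m}: the constraint $x_{v'} \geq 1$ holds for each $v' \in V_\omega$ because those vertices lie in the support; and $x_v = 0$ holds for each $v \in V_\infty$ because any $v \notin V_\omega$ with $w_v > w_{v'_1}$ that were used would be more expensive than the cheapest member $v'_1$ of the top-$\omega$ set, contradicting the maximality of $V_\omega$. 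Hence $\langle x^*, y^* \rangle$ is an integral feasible point of \ref{lp:m} of value $\OP$, giving $\OP'_I \leq \OP$ and completing the sandwich.

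The main obstacle is the implicit assumption that the optimal solution uses at least $\omega$ distinct vertices, so that a well-defined ``$\omega$ most expensive'' set $V_\omega$ exists and the maximality argument for $V_\infty$ goes through. When the optimum has fewer than $\omega$ distinct vertices in its support, this argument does not apply directly; this is precisely the boundary case flagged in the footnote, where one instead restricts the program to let only the $\omega$ guessed vertices take nonzero integer values and appeals to Lenstra's mixed-integer programming result. I would handle that case separately, and note that since the algorithm enumerates all size-$\omega$ subsets of $V$, the correct guess $V_\omega$ is guaranteed to be among those tried.
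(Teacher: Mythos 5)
Your proof is correct and follows essentially the same reasoning the paper gives (the paper only sketches this lemma in the preceding paragraph: for the correct guess, \(V_\omega\) is contained in the optimal support and every vertex more expensive than \(v'_1\) can be excluded, so the optimal integral solution to \ref{ip:sc} remains feasible for \ref{lp:m}). Your write-up simply makes the two-sided sandwich and the tie/degenerate-support caveats explicit, which the paper relegates to its footnote.
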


Going forward, we assume that we have the correct guess \(V_\omega\). Now, we compute a fractional solution to \ref{lp:m}, \(\lra{x',y'}\), which has cost \(\OP'_{LP}\). We want to modify this solution to concentrate the coverage of each edge into exactly one vertex. While this may seem unecessary if we allow splittable demands, in the LP solution, we promise coverage of edge \(e\) to an extent of \(\sum_{v\in e}y_{e,v}\). Without any change, we would need to partially open all \(v\in e\) in order to achieve this coverage. To avoid this we give the ``responsibility'' of this edge to one of its vertices. We generalize the approach of Bandyapadhyay et al \Cite{colorful-vertex-edge}.

\begin{lemma}\label{lem:phi}
There is a solution \(\lra{\tx,\ty}\) with the folowing properties:
\begin{enumerate}
    \item \(\sum_{v\in V}\wv_v \tx_v\leq f\cdot\OP'_{LP}\);
    \item there is a function \(\phi:E\to V\) such that for each edge, \(\phi\lrp{e}\in e\);
    \item \(\lra{\tx,\ty}\) is feasible with respect to \ref{lp:m};
    \item \(\lra{\tx,\ty}\) can be obtained in polynomial time.
\end{enumerate}
\end{lemma}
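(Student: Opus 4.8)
The plan is to build a \emph{responsibility assignment} by concentrating, for each edge, its entire fractional coverage onto a single ``heaviest'' endpoint, and then to pay for this concentration by scaling the opening variables up by a factor of $f$. Concretely, for each edge $e\in E$ define $\phi(e)\in\arg\max_{v\in e}y'_{e,v}$, breaking ties arbitrarily; since $\phi(e)\in e$ by construction, this already supplies the function required by property~2. Writing $c_e:=\sum_{v\in e}y'_{e,v}$ for the fractional coverage of $e$, I would set $\ty_{e,\phi(e)}:=c_e$, set $\ty_{e,v}:=0$ for every $v\in e\setminus\{\phi(e)\}$, and set $\tx_v:=f\cdot x'_v$ for all $v\in V$. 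This is computable in polynomial time, giving property~4 immediately, and the per-edge total coverage is unchanged, i.e.\ $\sum_{v\in e}\ty_{e,v}=c_e$ for every $e$.

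The single inequality driving everything is that, because $\phi(e)$ attains the maximum among the at most $f$ endpoints of $e$, we have $c_e\le|e|\cdot y'_{e,\phi(e)}\le f\cdot y'_{e,\phi(e)}$. I would then verify feasibility with respect to \ref{lp:m} constraint by constraint (property~3). The coverage bound $\sum_{v\in e}\ty_{e,v}=c_e\le1$ and the partition constraints $\sum_{e\in E_g}\sum_{v\in e}\ty_{e,v}=\sum_{e\in E_g}c_e\ge\rho_g$ hold because the total coverage per edge is preserved from $\lra{x',y'}$. The bounds $0\le\ty_{e,v}\le1$ are inherited. The constraint $\ty_{e,v}\le\tx_v$ is trivial for $v\ne\phi(e)$ and for $v=\phi(e)$ follows from $\ty_{e,\phi(e)}=c_e\le f\,y'_{e,\phi(e)}\le f\,x'_{\phi(e)}=\tx_{\phi(e)}$, using $y'_{e,v}\le x'_v$. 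The guessing constraints survive as well: $\tx_{v'}=f\,x'_{v'}\ge1$ for $v'\in V_\omega$ since $x'_{v'}\ge1$ and $f\ge1$, and $\tx_v=0$ for $v\in V_\infty$.

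The capacity constraint is the one place where concentration could blow things up, and so it is where I expect the real (albeit mild) work to lie. Here all of $e$'s coverage now lands on $\phi(e)$, so I would bound, for each $v$,
\[
\sum_{e\in\delta(v)}\ty_{e,v}=\sum_{e:\,\phi(e)=v}c_e\le f\sum_{e:\,\phi(e)=v}y'_{e,v}\le f\sum_{e\in\delta(v)}y'_{e,v}\le f\,\kap_v x'_v=\kap_v\tx_v,
\]
where the first inequality is the max-endpoint bound and the third is the original capacity constraint for $\lra{x',y'}$. Finally, for the cost (property~1), $\sum_{v\in V}\wv_v\tx_v=f\sum_{v\in V}\wv_v x'_v=f\cdot\OP'_{LP}$. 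This completes all four properties. The main subtlety to double-check is that the argmax endpoint is well defined even when $c_e=0$ (one simply picks any endpoint, and all relevant terms vanish), so that $\phi$ is a total function on $E$ as required.
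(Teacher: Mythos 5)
Your proof is correct and follows essentially the same route as the paper: assign each edge's coverage to its argmax endpoint $\phi(e)$ and absorb the concentration loss via the key inequality $\sum_{v\in e}y'_{e,v}\le f\cdot y'_{e,\phi(e)}$, checking each constraint of \ref{lp:m} in turn. The only (harmless) differences are that you set $\ty_{e,\phi(e)}$ to the exactly preserved coverage $c_e$ rather than the paper's $\min\{1,\,f\cdot y'_{e,\phi(e)}\}$, and take $\tx_v=f\cdot x'_v$ rather than the paper's minimal choice $\tx_v=\frac{1}{k_v}\sum_{e\in\delta(v)}\ty_{e,v}$; all four properties still go through.
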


\begin{proof}
For each edge \(e=\lrp{u,v}\), we define \(\phi\lrp{e}=\arg\max_{v\in e}\lrs{y'_{e,v}}\), which satisfies the second property. Then, let 
\[
\ty_{e,v}=\begin{cases}
    \min\lrs{1,f\cdot y'_{e,v}} & v=\phi\lrp{e} \\
    0 & \text{else}
\end{cases}
\]
for all \(v\in e\in E\) and
\[
\tx_v = \frac{1}{k_v}\sum_{e\in\delta\lrp{v}}\ty_{e,v}
\]
for all \(v\in V\). We can observe that 
\[
\tx_v \leq \frac{f}{\kap_v}\sum_{e\in\delta\lrp{v}}y'_{e,v} \leq f\cdot x'_v.
\]

For the first property, 
\[
\sum_{v\in V}\wv_v \tx_v \leq f\cdot\sum_{v\in V}\wv_v x'_v.
\]

Now, we need to show feasibility. First,
\[
\sum_{v\in e}\ty_{e,v} = \ty_{e,\phi\lrp{e}} \leq 1.
\]
Next,
\[
\sum_{e\in E_g}\lrp{\sum_{v\in e}\ty_{e,v}} = \sum_{e\in E_g}\ty_{e,\phi\lrp{e}} \geq \sum_{e\in E_g}\lrp{\sum_{v\in e}y'_{e,v}} \geq \rho_g.
\]
Moreover,
\[
\sum_{e\in\delta\lrp{v}}\ty_{e,v} \leq f\sum_{e\in\delta\lrp{v}} y'_{e,v} \leq f\cdot \kap_v x'_v = \kap_v\tx_v.
\]
Finally, 
\[
\ty_{e,v} \leq f\cdot y'_{e,v} \leq f\cdot x'_v = \tx_v.
\]

Rounding the solution requires one pass through the graph, which covers the final property.
\end{proof}

Given this rounded solution, we will write a second (sparse) LP. However, we have already rounded up some number of copies of each vertex. For each \(v\in V\), we write \(\tx_v=\tm_v+\tf_v\) for \(\tm_v\in\NZ\) and \(0\leq\tf_v<1\). In other words, \(\tm_v\) is the integral multiplicity of \(v\in V\) while \(\tf_v\) is the fractional multiplicity. We want to round these fractional multiplicities.

Since we assume that the integral multiplicities will be fulfilled, we need to determine the edge coverage that we have when we enter the sparse LP. For each \(v\in V\) and \(g\in\lrs{1,\ldots,\omega}\), define \(E_{g,v}=\lrs{e\in E_g : \phi\lrp{e}=v}\). We observe that the sets \(\lrs{E_{g,v}}_{v\in V}\) partition \(E_g\). Then, define \(c_{g,v}=\sum_{e\in E_{g,v}}\ty_{e,v}\) and \(c_v=\kap_v\tx_v\). Finally, define \(\gamma_{g,v}=c_{g,v}/c_v\). In other words, \(\gamma_{g,v}\) is the fraction of promised coverage that vertex \(v\) owes to group \(g\). For each \(v\in V\) we can promise capacity \(\kap_v\tm_v\), so \(v\) covers \(\gamma_{g,v}\kap_v\tm_v\) profit of \(g\). With that in mind, we define 
\[
\trho_g = \rho_g-\sum_{v\in V}\gamma_{g,v}\kap_v\tm_v
\]
to be the remaining number of edges we need to cover. The sparse LP is therefore
\begin{align*}
    &\text{maximize} &\sum_{v\in V}\kap_v\gamma_{1,v}\alpha_v& \tag{LP-S}\label{lp:s} \\
    &\text{subject to} &\sum_{v\in V}\kap_v\gamma_{g,v}\alpha_v \geq \trho_g &\qquad\forall g=2,\ldots\omega \\
    & &\sum_{v\in V}\wv_v\alpha_v \leq \sum_{v\in V}\tf_v\wv_v \\
    & &0\leq \alpha_v\leq1 &\qquad\forall v\in V \\
    & & \alpha_v = 0 &\qquad \forall v\in V_\infty
\end{align*}

\begin{lemma}\label{lem:sparse-opt}
    There is a solution to \ref{lp:s} with objective function value at least \(\trho_1\).
\end{lemma}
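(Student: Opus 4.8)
The plan is to exhibit an explicit feasible point, namely \(\alpha_v=\tf_v\) for every \(v\in V\), and verify that it satisfies every constraint of \ref{lp:s} while attaining objective value at least \(\trho_1\). The entire argument rests on a single identity coming from the definition of \(\gamma_{g,v}\): since \(\gamma_{g,v}=c_{g,v}/c_v=c_{g,v}/(\kap_v\tx_v)\), we have \(\kap_v\gamma_{g,v}\tx_v=c_{g,v}\) for each group \(g\) and vertex \(v\). (For vertices with \(\tx_v=0\) I would adopt the convention \(\gamma_{g,v}=0\); such vertices have \(c_{g,v}=0\) as well as \(\tf_v=\tm_v=0\), so they contribute nothing to either side and can be ignored.) Writing \(\tx_v=\tm_v+\tf_v\) and rearranging the identity gives \(\kap_v\gamma_{g,v}\tf_v=c_{g,v}-\kap_v\gamma_{g,v}\tm_v\).

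Next I would sum this identity over all \(v\in V\). Because the sets \(\lrs{E_{g,v}}_{v\in V}\) partition \(E_g\), the aggregated coverage telescopes:
\[
\sum_{v\in V}c_{g,v}=\sum_{v\in V}\sum_{e\in E_{g,v}}\ty_{e,v}=\sum_{e\in E_g}\ty_{e,\phi\lrp{e}}=\sum_{e\in E_g}\sum_{v\in e}\ty_{e,v}\geq\rho_g,
\]
where the last inequality is exactly the partition constraint satisfied by \(\lra{\tx,\ty}\) (property 3 of \Cref{lem:phi}). Substituting, \(\sum_{v\in V}\kap_v\gamma_{g,v}\tf_v\geq\rho_g-\sum_{v\in V}\kap_v\gamma_{g,v}\tm_v=\trho_g\) for every group \(g\). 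For \(g=1\) the left-hand side is precisely the objective of \ref{lp:s} evaluated at \(\alpha=\tf\), so the objective is at least \(\trho_1\); for \(g=2,\ldots,\omega\) the same inequality is exactly the feasibility of the group constraints.

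It then remains to check the remaining constraints at \(\alpha=\tf\). The weight-budget constraint \(\sum_{v\in V}\wv_v\alpha_v\leq\sum_{v\in V}\tf_v\wv_v\) holds with equality by construction; the box constraints \(0\leq\alpha_v\leq1\) follow from \(0\leq\tf_v<1\); and for \(v\in V_\infty\) we have \(x'_v=0\) in \ref{lp:m}, hence \(\tx_v\leq f\cdot x'_v=0\), so \(\tf_v=0\), matching the requirement \(\alpha_v=0\) on \(V_\infty\).

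I do not expect a genuine obstacle here: once \(\alpha=\tf\) is chosen as the witness, the proof is a direct computation. The only point requiring care is the bookkeeping that lets group \(1\) play the role of the objective while groups \(2,\ldots,\omega\) play the role of constraints—all handled uniformly by the identity \(\kap_v\gamma_{g,v}\tx_v=c_{g,v}\) combined with the partition feasibility of \(\lra{\tx,\ty}\). The \(\tx_v=0\) degeneracy in the definition of \(\gamma_{g,v}\) should be flagged explicitly so that the division is well defined.
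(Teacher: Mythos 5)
Your proof is correct and takes essentially the same approach as the paper: both exhibit \(\alpha_v=\tf_v\) as the feasible witness and derive \(\sum_{v}\kap_v\gamma_{g,v}\tf_v\geq\trho_g\) for every group \(g\) from the identity \(\kap_v\gamma_{g,v}\tx_v=c_{g,v}\) together with the partition feasibility of \(\lra{\tx,\ty}\). Your write-up is merely more explicit about the \(\tx_v=0\) degeneracy and the verification of the budget, box, and \(V_\infty\) constraints, which the paper dispatches in one line.
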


\begin{proof}
We will show that \(\alpha_v=\tf_v\) for all \(v\in V\) is a feasible solution to \ref{lp:s} with objective function value at least \(\trho_1\). The budget constraint is met by definition, which implies that we can meet the covering constraints without exceeding the cost of the rounded LP solution. For both the objective function value and the profit constraints,
\[
\sum_{v\in V}\kap_v\gamma_{g,v}\tf_v = \sum_{v\in V}\sum_{v\in V}\kap_v\gamma_{g,v}\lrp{\tx_v-\tm_v} \geq \rho_g-\sum_{v\in V}\kap_v\gamma_{g,v}\tm_v = \trho_g.
\]
\end{proof}

Finally, we round \(\talpha_v=\lrceil{\alpha'_v}\). We use \(\tm_v+\talpha_v\) copies of each \(v\in V\).

\begin{lemma}
The number of fractional variables in \(\alpha'\) is at most \(\omega\).
\end{lemma}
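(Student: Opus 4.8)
The plan is to take $\alpha'$ to be not merely any optimal solution of \ref{lp:s} but a \emph{basic} (extreme-point) optimal solution, and then to run the standard extreme-point counting argument on the polytope defined by \ref{lp:s}. By \Cref{lem:sparse-opt} the feasible region is nonempty, and since every variable is confined to $[0,1]$ it is a bounded polytope; hence it has extreme points and the objective attains its optimum at one, call it $\alpha'$. Solving the LP by any vertex-returning method produces such an $\alpha'$ in polynomial time, so we may assume $\alpha'$ is an extreme point throughout.

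Next I would classify the constraints of \ref{lp:s} into two types. The \emph{structural} constraints are the $\omega-1$ covering inequalities $\sum_{v}\kap_v\gamma_{g,v}\alpha_v\ge\trho_g$ (for $g=2,\ldots,\omega$) together with the single budget inequality $\sum_v \wv_v\alpha_v\le\sum_v\tf_v\wv_v$, for a total of $\omega$ structural constraints. The \emph{box} constraints are $0\le\alpha_v\le1$ for each $v\in V$, together with the fixing constraints $\alpha_v=0$ for $v\in V_\infty$; crucially, the normal vector of any tight box or fixing constraint is a single coordinate vector $e_v$.

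The heart of the argument is the rank count. Let $F=\lrs{v\in V: 0<\alpha'_v<1}$ be the set of fractional variables, and note that any $v\in V_\infty$ has $\alpha'_v=0$ and so lies outside $F$. Since $\alpha'$ is an extreme point of a polytope in $\R^{|V|}$, the constraints tight at $\alpha'$ have normal vectors spanning all of $\R^{|V|}$. For each $v\notin F$ the coordinate $\alpha'_v$ sits at $0$ or $1$, so a box (or fixing) constraint is tight and contributes the normal $e_v$; these span a subspace of dimension at most $|V|-|F|$. For $v\in F$, by contrast, no box constraint is tight, so to raise the total span to the full $|V|$ the tight \emph{structural} constraints must contribute rank at least $|F|$ in the directions indexed by $F$. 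As there are only $\omega$ structural constraints in all, this rank is at most $\omega$, and therefore $|F|\le\omega$.

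I expect the main subtlety to be making the complementary-rank step fully rigorous: one should pass to the quotient $\R^{|V|}/\operatorname{span}\lrs{e_v:v\notin F}\cong\R^{F}$ and observe that the images of the structural normal vectors must span this quotient, so that $|F|=\dim\R^{F}$ cannot exceed the number $\omega$ of structural constraints. A minor point to confirm along the way is that the $V_\infty$-fixing constraints behave exactly like box constraints (their normals are coordinate vectors) and hence never force a variable to be fractional, so they can be folded into the box-constraint bookkeeping without affecting the count.
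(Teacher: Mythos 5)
Your proof is correct and is precisely the standard extreme-point counting argument that the paper invokes by deferring to Bandyapadhyay et al.: a basic optimal solution of \ref{lp:s} has $|V|$ linearly independent tight constraints, only $\omega$ of which (the $\omega-1$ covering constraints plus the budget constraint) are not coordinate-vector box/fixing constraints, so at most $\omega$ coordinates can avoid being pinned to $0$ or $1$. The only point worth making explicit in the write-up is that $\alpha'$ must indeed be taken as a basic optimal solution (obtainable in polynomial time), which you do.
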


The statement is limited in scope
to properties of \ref{lp:s}, so the proof from Bandyapadhyay et al \cite{colorful-vertex-edge} satisfies our generalization. 

\thmsc*

\begin{proof}
We define the sets \(V_1=\lrs{v\in
V:\alpha'_v=1}\) and \(V_F=\lrs{v\in V:\alpha'\in\lrp{0,1}}\). We know that
\(\lrm{V_F}\leq\omega\). It may be the case that \(\talpha\) is longer a feasible solution to \ref{lp:s} since it may violate the cost constraint to a limited amount. However, it does not violate the profit constraints. We can bound the cost:

\begin{align*}
    \sum_{v\in V}\wv_v\lrp{\tm_v+\talpha_v} &= \sum_{v\in V}\wv_v\tm_v+\sum_{v\in V_1}\wv_v+\sum_{v\in V_F}\wv_v\talpha_v \\
    &= \sum_{v\in V}\wv_v\tm_v+\sum_{v\in V_1}\wv_v\alpha'_v+\sum_{v\in V_F}\wv_v\alpha'_v+\sum_{v\in V_F}\wv_v\lrp{\talpha_v-\alpha'_v} \\
    &= \sum_{v\in V}\wv_v\lrp{\tm_v+\alpha'_v}+\sum_{v\in V_F}\wv_v\lrp{\talpha_v-\alpha'_v} \\
    &\leq \sum_{v\in V}\wv_v\tx_v + \sum_{v\in V_\omega}\wv_v \leq f\OP'_{LP}+\sum_{v\in V_\omega}\wv_v \leq \lrp{f+1}\OP
\end{align*}

Therefore, by rounding \(\alpha'\), we increase the solution by no more than the total cost of the guessed vertices. Finally, we integrally round the \(\ty\) using \Cref{lem:flow}.
\end{proof}

We can generalize this problem to have non-unit edge weights \(p_e\in\R^+\). In this generalization, we allow \(y_{e,v}\in\lrb{0,1}\), meaning that the coverage of each edge can be split between its adjacent vertices and the outlier designation. The proof follows identically but skips \Cref{lem:flow}. Our algorithm still achieves an \(\lrp{f+1}\)-approximation, which matches the approximation factor achieved by Bar-Yehuda et al \Cite{Bar-YehudaFMR10} using the local ratio method for \(\omega=1\).

\section{Hard Capacitated Unweighted Partition Vertex Cover}
\label{sec:hc-pvc}
Our approach is based on rounding a solution to the following LP relaxation.

\begin{align*}
\text{minimize} \quad & \sum_{v \in V} x_v \\
\text{subject to:} \quad & \\
& \sum_{v \in e} y_{e,v} \le 1 & & \forall e \in E \\
& \sum_{e \in E_g} \left( \sum_{v \in e} y_{e,v} \right) \ge \rho_g & & \forall g\in\lrs{1,\ldots,\omega} \\
& y_{e,v} \le x_v & & \forall v\in e\in E \\
& \sum_{e \in \delta(v)} y_{e,v} \le k_v x_v & & \forall v \in V \\
& 0 \le x_v \le 1 & & \forall v \in V \\
& y_{e,v} \ge 0 & & \forall v\in e\in E
\end{align*}

It is sufficient to find an integral solution for $x$; an integral assignment $y$ can then be found efficiently via a max-flow computation. Our algorithm therefore focuses on finding an integral $x$ with a bounded cost.

\subsection{Algorithm}
Our algorithm is based on an iterative rounding framework. The core idea is to repeatedly solve a linear programming (LP) relaxation, fix a subset of variables to integral values based on the solution, and then re-solve a smaller, updated LP on the remaining problem. This process is guided by a dynamic LP formulation that adapts to the decisions made in each step (see LP-IP). 

We maintain several sets for the iterative rounding algorithm. Let $D$ be the set of vertices whose final integral counts have been determined, and let $T$ be the set of edges that are now definitively covered. We use $T_v^g$ to denote the subset of edges in $T$ of group $g$ that are covered by $v$. Based on the fractional solution from the previous step, we partition the remaining undecided vertices into two groups: $U = \{v \in V \setminus D \mid x_v^* \ge 1/f\}$, which contains vertices with high fractional values, and $W = \{v \in V \setminus D \mid 0 < x_v^* < 1/f\}$, which contains those with low fractional values. $U$ is further divided into $U_> = \{v \in V \setminus D \mid x_v^* > 1/f\}$ and $U_= = \{v \in V \setminus D \mid x_v^* = 1/f\}$. Let $Z = \{v \in V \setminus D \mid x_v^* = 0 \}$. Initially set $T$ and $D$ are empty.

\begin{align*}
\text{minimize} \quad & \sum_{v \in V \setminus D} x_v \tag{LP-IP}\\
\text{subject to:} \\
& \sum_{v \in e \setminus D} y_{e,v} \le 1 - \sum_{v' \in e \cap D} \bar{y}_{e,v'} & & \forall e \in E \setminus T \\
& \sum_{e \in E_g \setminus T} \sum_{v \in e \setminus D} y_{e,v} \ge \rho_g & & \\
& \qquad\qquad - \sum_{v \in V} |T_v^g| - \sum_{e \in E_g \setminus T} \left(\sum_{v' \in e \cap D} \bar{y}_{e,v'}\right) & & g\in\lrs{1,\ldots,\omega}\\
& y_{e,v} \le x_v & & \forall e \in E \setminus T, v \in e \setminus D \\
& \sum_{e \in \delta(v) \setminus T} y_{e,v} \le \left(k_v - \sum_{c \in C} |T_v^{g}| \right) x_v & & \forall v \in U_> \\
& \sum_{e \in \delta(v) \setminus T} y_{e,v} \le k_v x_v & & \forall v \in W \\
& 1/f \le x_v \le 1 & & \forall v \in U_> \\
& 0 \le x_v \le 1/f & & \forall v \in W \\
& y_{e,v} \ge 0 & & \forall e \in E \setminus T, v \in e \setminus D
\end{align*}

The complete iterative rounding framework is detailed in Algorithm~\ref{alg:my-cr-vchc}. It formalizes the process of identifying which variables to fix, updating the problem state, and iterating until a fully integral solution is found.

\begin{algorithm}
\footnotesize
\caption{Iterative Rounding Algorithm}
\label{alg:my-cr-vchc}
\begin{algorithmic}[1]
    \State Solve the initial LP Relaxation to get an extreme point solution $(x^*, y^*)$.
    \State Initially $T^g_v \gets \emptyset$ for all $v \in V$, $g\in\lrs{1,\ldots,\omega}$  and $D \gets \emptyset$.
    \State Initialize $U, U_>, U_=, W, Z$ based on $(x^*, y^*)$.
    
    \Repeat
        \For{$v \in Z$} \Comment{Handle zero-valued vertices}
            \State Set $\bar{x}_v \gets 0$.
            \State $D \gets D \cup \{v\}$.
        \EndFor
        
        \If{$y^*_{e,u} = x_u^*$ for some $u \in U$ and $e \in \delta(u) \setminus T$} \Comment{Assign tight edges}
            \State Set $\bar{y}_{e,u} \gets 1$, and $\bar{y}_{e,v} \gets 0$ for $v \in e \setminus \{u\}$.
            \State $T^g_u \gets T^g_u \cup \{e\}$ \textbf{for} $ e\in E_g$.
        \EndIf
        
        \For{$u \in U_=$} \Comment{Handle boundary-valued vertices}
            \State Set $\bar{x}_u \gets 1$.
            \State \textbf{Set $\bar{y}_{e,u} \gets y^*_{e,u}$ for $e \in \delta(u) \setminus T$}. \Comment{Fix edge assignments}
            \State $D \gets D \cup \{u\}$.
        \EndFor
        
        \State Solve the updated LP-IR for a new solution $(x^*, y^*)$.
        \State Update $U, U_>, U_=, W, Z$ based on the new $(x^*, y^*)$.
        
    \Until{$U_= = \emptyset$ and $Z = \emptyset$ and no tight edges are found}
    
    \For{$v \notin D$} \Comment{Final rounding step}
        \State Set $\bar{x}_v \gets \lceil x_v^* \rceil$.
    \EndFor
    
    \State \textbf{return} integral solution $\bar{x}$.
\end{algorithmic}
\end{algorithm}

\subsection{Analysis}

\begin{lemma}
\label{lem:intx}
Suppose a vertex $u$ satisfies $x_u^* \ge 1/f$ at some iteration. Then $x_u^* \ge 1/f$ in all subsequent iterations as long as $u \notin D$. Consequently, the final integral value $\bar{x}_u$ will be at least 1.
\end{lemma}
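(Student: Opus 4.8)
The plan is to argue by induction on the iterations of Algorithm~\ref{alg:my-cr-vchc}, exploiting the fact that the box constraints of LP-IP are themselves indexed by the current partition of the undecided vertices. The crucial structural observation is that a vertex in $U_>$ is never removed from the undecided set within a single iteration: the only vertices added to $D$ in the body of the loop are those in $Z$ (handled in the zero-valued loop) and those in $U_=$ (handled in the boundary-valued loop), whereas a tight-edge assignment touches only $T$ and the counters $T_v^g$, not $D$. Hence a vertex $u$ with $x_u^* > 1/f$ persists into the next LP solve, at which point it carries the explicit lower bound $1/f \le x_u \le 1$ dictated by its membership in $U_>$. This reasserted bound is what forces the property to propagate.

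Concretely, I would fix the first iteration $t_0$ at which $x_u^* \ge 1/f$ and induct on $t \ge t_0$, proving that whenever $u \notin D$ at the start of iteration $t$ we have $x_u^* \ge 1/f$ there. The base case is the hypothesis of the lemma. For the inductive step, suppose the claim holds at iteration $t$ and that $u \notin D$ at the start of iteration $t+1$; this means $u$ was not placed in $D$ during iteration $t$. Since $x_u^* \ge 1/f > 0$ we have $u \notin Z$, so $u$ survives the zero-valued loop. If $x_u^* = 1/f$ then $u \in U_=$ and the algorithm would set $\bar{x}_u \gets 1$ and add $u$ to $D$, contradicting $u \notin D$ at iteration $t+1$; therefore $x_u^* > 1/f$, i.e.\ $u \in U_>$. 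Consequently the LP-IP solved in iteration $t$ enforces $x_u \ge 1/f$, so the freshly computed solution satisfies $x_u^* \ge 1/f$, which is exactly the claim at iteration $t+1$.

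Finally, to obtain $\bar{x}_u \ge 1$, note that $u$ leaves the undecided set in exactly one of two ways. Either it enters $U_=$ at some iteration, where the algorithm explicitly sets $\bar{x}_u \gets 1$; or it is still undecided when the loop terminates, in which case the final rounding step sets $\bar{x}_u \gets \lceil x_u^* \rceil$, and since $x_u^* \ge 1/f > 0$ we get $\lceil x_u^* \rceil \ge 1$. In either case $\bar{x}_u \ge 1$. I expect the only delicate point to be the clean bookkeeping of the membership transitions—specifically, verifying that the lower-bound constraint genuinely propagates because $U_>$ vertices are untouched by the three handling steps of an iteration, so the bound $x_u \ge 1/f$ is reimposed in every LP in which such a vertex participates. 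No quantitative estimate is required; the statement is purely a monotonicity fact driven by the persistent box constraint indexed by $U_>$.
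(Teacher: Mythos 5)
Your proposal is correct and follows essentially the same reasoning as the paper's proof: the case split on $x_u^* = 1/f$ (handled by $U_=$ and immediately fixed to $\bar{x}_u = 1$) versus $x_u^* > 1/f$ (where the persistent lower-bound constraint $x_u \ge 1/f$ in LP-IP preserves the property), followed by the observation that the final rounding step yields $\lceil x_u^* \rceil \ge 1$. Your explicit induction and the bookkeeping of which steps add vertices to $D$ make the argument slightly more formal, but the underlying mechanism is identical.
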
 

\begin{proof}
When $x_u^* \ge 1/f$, the vertex $u$ is placed in the set $U$. We consider two cases based on its value.
\begin{itemize}
    \item If $x_u^* = 1/f$, then $u \in U_=$. In the same iteration, the algorithm adds $u$ to the set of decided vertices $D$ and fixes its final integral value as $\bar{x}_u = 1$.
    \item If $x_u^* > 1/f$, then $u \in U_>$. For all subsequent iterations where $u$ remains an undecided variable, the updated LP-IR includes the explicit constraint $x_u \ge 1/f$. This ensures its fractional value cannot drop below $1/f$.
\end{itemize}
Eventually, either $u$ enters $U_=$ (and is rounded to $\bar{x}_u = 1$) or the algorithm terminates with $u \notin D$. In the latter case, it is rounded up in the final step, yielding $\bar{x}_u = \lceil x_u^* \rceil \ge \lceil 1/f \rceil = 1$. In all possible outcomes, if $u$ ever enters $U$, its final value $\bar{x}_u$ is at least 1.
\end{proof}

\begin{lemma}
\label{lem:fes}
The solution $\bar{x}$ returned by the algorithm is a feasible integral solution to the problem.
\end{lemma}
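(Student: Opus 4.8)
The plan is to verify the two ingredients of feasibility separately: that $\bar x$ is integral, and that $\bar x$ admits a valid edge assignment meeting every group threshold $\rho_g$ within the hard capacities. Integrality is immediate from the algorithm: every vertex placed in $D$ receives $\bar x_v \in \{0,1\}$ (namely $0$ for $v \in Z$ and $1$ for $u \in U_=$), and the final step sets $\bar x_v = \lceil x_v^* \rceil$, which lies in $\{0,1\}$ because $0 \le x_v^* \le 1$ is enforced in LP-IR. So the substance of the proof lies entirely in establishing covering-and-capacity feasibility.

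The main engine will be a loop invariant: at the start of every iteration, LP-IR is feasible, and any feasible LP-IR solution, when combined with the already-committed assignments $\{\bar y\}$ and the counters $\{T_v^g\}$, yields a solution to the original relaxation that covers at least $\rho_g$ edges of each group $E_g$ while respecting the residual capacities. I would prove this by induction over iterations. The base case is feasibility of the initial relaxation. For the inductive step I would check that each of the three fixing operations preserves feasibility: (i) setting $\bar x_v = 0$ for $v \in Z$ is harmless, since $y^*_{e,v} \le x_v^* = 0$ forces $v$ to carry no coverage; (ii) committing a tight edge $e$ to $u \in U$ (setting $\bar y_{e,u}=1$ and moving $e$ into $T$) is absorbed because \Cref{lem:intx} guarantees $\bar x_u \ge 1$, so $u$ ultimately has full capacity $k_u$, and the residual capacity term $k_u - \sum_g |T_u^g|$ together with the residual thresholds $\rho_g - \sum_v|T_v^g| - \cdots$ are decremented by exactly the coverage just committed; (iii) rounding $u \in U_=$ to $\bar x_u = 1 \ge x_u^* = 1/f$ and fixing $\bar y_{e,u} = y_{e,u}^*$ only relaxes the capacity bound (we round up) while subtracting the committed coverage from the residual thresholds, so the restriction of the old solution stays feasible for the updated LP-IR.

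With the invariant in hand, I would treat termination and the final rounding. At termination $Z = U_= = \emptyset$ and no tight edge remains, so every undecided vertex lies in $U_> \cup W$ with $0 < x_v^* \le 1$, and the last LP-IR solution $(x^*,y^*)$ is feasible by the invariant. The final step sets $\bar x_v = \lceil x_v^* \rceil = 1$ for each such $v$. The key point is monotonicity: raising $x$ from $x_v^*$ up to $1$ only loosens the constraints $y_{e,v} \le x_v$ and $\sum_{e} y_{e,v} \le (\text{cap}_v)\, x_v$, while the per-edge constraint $\sum_{v \in e} y_{e,v} \le 1$ and the covering constraints do not involve $x$ at all. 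Hence $(\bar x, y^*)$ remains feasible, and a short per-vertex check confirms the capacity bound: for $v \in U_>$ the committed load satisfies $\sum_g|T_v^g| + \sum_e y_{e,v}^* \le \sum_g|T_v^g| + \bigl(k_v-\sum_g|T_v^g|\bigr) = k_v$, and for $v \in W$ it is at most $k_v x_v^* < k_v$. Combining the decided assignments with $y^*$ thus covers every $\rho_g$ within capacity using the integral $\bar x$. Finally, since only $x$ must be integral, I would invoke the max-flow argument noted just before LP-IR (integral arc capacities yield an integral maximum flow of the same value) to replace the possibly fractional $y$ by an integral assignment, completing feasibility.

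The step I expect to be the main obstacle is the tight-edge fixing~(ii): it is the only operation that promotes a fractional value $y_{e,u}^* = x_u^* < 1$ all the way to $1$ and permanently removes $e$ from the active problem, so I must argue carefully that the simultaneous decrements of the residual capacity $k_u - \sum_g |T_u^g|$ and the residual group thresholds keep every LP-IR constraint satisfiable, and that the accumulated integral commitments across all iterations never exceed any vertex's true capacity $k_u$ once it is rounded to $\bar x_u = 1$. The remaining bookkeeping—the monotonicity of rounding up and the flow integralization—is routine by comparison.
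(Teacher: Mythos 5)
Your proposal is correct and follows essentially the same route as the paper's proof: integrality by construction, preservation of residual coverage and capacity under each fixing operation, monotonicity of the final round-up, and a max-flow step to integralize $y$. The only difference is presentational — you package the paper's informal "each step preserves feasibility of the residual LP" argument as an explicit loop invariant proved by induction, which is a more rigorous rendering of the same idea.
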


\begin{proof}
\begin{enumerate}
    \item \textbf{Integrality:} The final solution $\bar{x}$ is integral by construction. Variables are set to integer values (0 or 1) during the loop, and the final step applies the ceiling function.
    
    \item \textbf{Partition Constraints:} The initial LP ensures that the group coverage $\rho_g$ is satisfied fractionally. As the algorithm fixes variables and assigns edges to $T$, it reduces the coverage requirement for subsequent LPs. The final LP solution $(x^*, y^*)$ satisfies the remaining coverage fractionally. The final rounding step, $\bar{x}_v = \lceil x_v^* \rceil$, only increases the available capacity, which means at least as many edges can be covered as in the fractional solution. Thus, the integral solution meets or exceeds the fractional coverage and satisfies all group coverage requirements.
    
    \item \textbf{Capacity Constraints:} 

    \begin{itemize}
    \item \textbf{For vertices in $U_>$:}  The capacity used by any ``tight'' edges is explicitly subtracted and budgeted for in later steps. When the remaining vertices are rounded up, the new integer capacity is always greater than or equal to the fractional capacity required by the LP solution ($k_u \lceil x_u^* \rceil \ge k_u x_u^*$), ensuring the constraint holds.

    \item \textbf{For vertices in $U_=$:} When a vertex is rounded from $1/f$ to $1$, the LP guarantees that its fractional edge assignments required a capacity of at most $(k_u - \sum_{g \in [\omega]}|T^g_u|)/f$. This is well within the newly allocated integer capacity of $(k_u - \sum_{g \in [\omega]}|T^g_u|)$, so the assignment is valid.

    \item \textbf{For vertices in $W$ :} At the end of the algorithm, these vertices are rounded up from a value less than $1/f$ to $1$. The LP solution's required capacity was less than the final integer capacity ($k_w x_w^* < k_w \cdot 1$), so the final assignment is feasible.
\end{itemize}
    \end{enumerate}
\end{proof}

\begin{lemma}
\label{lem:cost1}
The optimal value of the LP relaxation does not increase between iterations. The cost of rounding decisions in the main loop can be charged to the fractional values being rounded, with at most a factor $f$ loss.
\end{lemma}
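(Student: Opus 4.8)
The plan is to prove the two assertions by a single mechanism: showing that the extreme point solved at the start of an iteration, restricted to the still-undecided variables, is a feasible point of the next LP-IR whose objective value is no larger than the current optimum.

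For the monotonicity claim, I would fix an iteration in which the sets $Z$, the tight edges, and $U_=$ are processed, producing an enlarged decided set $D'$ and an enlarged covered set $T'$. Let $(x^*, y^*)$ be the extreme point at the start of the iteration. The candidate I would feed into the updated LP-IR is obtained by discarding the coordinates indexed by $D'$ (for $x$) and by $T' \cup D'$ (for $y$), leaving the remaining $x^*_v$ and $y^*_{e,v}$ unchanged, and then verifying every constraint. The edge constraints hold because $\sum_{v \in e} y^*_{e,v} \le 1$ and we subtract exactly $\sum_{v' \in e \cap D'} \bar{y}_{e,v'} = \sum_{v' \in e \cap D'} y^*_{e,v'}$ on the right-hand side. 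The partition constraints hold because the coverage contributed by newly decided vertices and newly covered edges is moved verbatim from the left-hand side into the subtracted terms $\sum_v |T_v^g|$ and $\sum_{e \in E_g \setminus T'} \sum_{v' \in e \cap D'} \bar{y}_{e,v'}$. The capacity constraint for $u \in U_>$ is the delicate one: a tight edge $e$ placed into $T'$ satisfied $y^*_{e,u} = x^*_u$, so removing it drops exactly $x^*_u$ from the left-hand side, and this matches the reduction to $(k_u - \sum_g |T_u^g|)x_u$ on the right, since each tight edge contributes one unit to the capacity subtraction. Feasibility of this restricted point then yields $\OP_{LP}^{\text{new}} \le \sum_{v \in V \setminus D'} x^*_v \le \OP_{LP}^{\text{old}}$, because the discarded coordinates are nonnegative; this is precisely the monotonicity statement.

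For the charging claim, I would observe that the only variables fixed during the main loop that incur positive cost are the vertices in $U_=$, each moved from $x^*_u = 1/f$ to $\bar{x}_u = 1$; vertices in $Z$ are set to $0$ at no cost, and assigning a tight edge alters only $y$, not $x$. Thus a vertex $u \in U_=$ incurs integral cost $1 = f \cdot (1/f) = f \cdot x^*_u$, i.e.\ its rounding cost equals exactly $f$ times the fractional value it removes from the objective. Summing over the vertices processed in the iteration, the total main-loop rounding cost is at most $f$ times the fractional mass deleted from the LP objective in that iteration. Combined with the monotonicity of $\OP_{LP}$ established above, this deleted mass telescopes against the relaxation value, so the cumulative cost of all main-loop rounding decisions is bounded by $f \cdot \OP_{LP}$ of the initial LP.

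The main obstacle I anticipate is the bookkeeping in the feasibility check: one must confirm that the right-hand-side updates in LP-IR exactly absorb the mass removed from each left-hand side as vertices enter $D$ and edges enter $T$. The partition and $U_>$-capacity constraints are the sensitive cases, since they require using that tight edges are assigned with $y^*_{e,u} = x^*_u$ and that each covered edge is counted once in $|T_u^g|$, so that no coverage or capacity is double-counted or lost across the update. Once this accounting is verified, both halves of the lemma follow directly.
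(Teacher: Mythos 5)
Your proposal is correct, and it supplies the mechanism that the paper's own two-line proof only gestures at. The paper justifies monotonicity by saying that fixing variables and tightening bounds ``shrinks the feasible region,'' but for a minimization problem a shrinking feasible region by itself only shows the optimum cannot \emph{decrease}; what the lemma actually needs is the direction you prove, namely that the previous extreme point, restricted to the still-undecided coordinates, remains feasible for the updated LP-IP with objective value $\OP^{\text{old}}$ minus the (nonnegative) mass of the newly decided vertices, hence $\OP^{\text{new}} \le \OP^{\text{old}} - \sum_{v \in D' \setminus D} x^*_v$. Your constraint-by-constraint verification --- in particular that a tight edge with $y^*_{e,u} = x^*_u$ removes exactly $x^*_u$ from both sides of the $U_>$ capacity constraint, and that each edge moved into $T$ decrements the partition requirement by $1$ while removing at most $1$ of fractional coverage --- is exactly the bookkeeping the paper leaves implicit. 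The charging half (only $U_=$ vertices incur cost, each at ratio exactly $f = 1/(1/f)$, telescoping to $|S_L| + f\cdot\OP_{\text{final}} \le f\cdot\OP_{LP}$) coincides with the paper's argument and with the accounting used in the proof of Theorem~\ref{thm:CVC2}. One very minor imprecision: the coverage of a newly tight edge is not moved ``verbatim'' into the $|T^g_v|$ term --- the edge contributed $\sum_{v\in e} y^*_{e,v} \le 1$ fractionally but is credited a full unit once in $T$ --- but the inequality goes the right way, so nothing breaks.
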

\begin{proof}
When rounding decisions are made, the problem is modified by either reducing the variables or modifying the constraint with stricter inequalities (e.g. $x_v \ge 1/f$ instead of $x_v \geq 1$). Both actions shrink the feasible region of the LP. Thus, the optimal value of the modified LP cannot improve. The cost added to our integral solution is charged to the fractional value removed from the LP. When $x_v^*=1/f$ is rounded to 1, the cost ratio is exactly $1/(1/f) = f$. 
\end{proof}

\begin{lemma}
\label{lem:count}
    Let $(x^*, y^*)$ be the extreme point solution upon which the algorithm terminates. $|W|\leq |U_1|+\omega$ where $U_1=\{u \in U: x^*_u=1 \}$.
\end{lemma}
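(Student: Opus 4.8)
The plan is to treat the terminal $(x^*,y^*)$ as a basic feasible solution (extreme point) of LP-IP and run the standard ``number of variables equals number of linearly independent tight constraints'' count, specialized using the three termination conditions. Restrict attention to the support: let $P=\{(e,v): y^*_{e,v}>0\}$ be the set of active assignment variables. Since the loop has ended with $U_==\emptyset$ and $Z=\emptyset$, the undecided vertices are exactly $V\setminus D=U_>\sqcup W$, and every $x$-variable is strictly positive, so the live variables are $\{x_v: v\in U_>\sqcup W\}$ together with $\{y_{e,v}:(e,v)\in P\}$, a total of $|U_>|+|W|+|P|$.

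I would first pin down which box constraints are tight. For $w\in W$ we have $0<x^*_w<1/f$ strictly, so neither $x_w\ge 0$ nor $x_w\le 1/f$ is tight; for $u\in U_>\setminus U_1$ we have $1/f<x^*_u<1$ strictly; hence the only tight box constraints are $x_u=1$ for $u\in U_1$. I would also invoke the ``no tight edges'' termination condition: for every $u\in U=U_>$ and $e\in\delta(u)\setminus T$ we have $y^*_{e,u}<x^*_u$, so the coupling constraint $y_{e,v}\le x_v$ is never tight on a vertex of $U$; tight $y=x$ constraints can only occur at $W$-vertices. Since the rank of all tight constraints equals the number of live variables and at most $|U_1|$ of them are box constraints, there must be at least $|U_>|+|W|+|P|-|U_1|$ linearly independent tight \emph{structural} constraints (edge-coverage, partition, capacity, and $y=x$).

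The heart of the argument is then the matching upper bound: I claim the rank of the tight structural constraints is at most $|P|+|U_>|+\omega$, which combined with the lower bound yields $|W|-|U_1|\le\omega$ immediately. To get it, peel off the $\le\omega$ partition rows (these are the only constraints that couple assignment variables across many edges) and show that the remaining tight rows---edge-coverage, capacity, and $y=x$---have rank at most $|P|+|U_>|$. The mechanism is that, after fixing $x$, the live $y$-variables sit in a transportation-type polytope (edge-capacity on one side, vertex-capacity on the other), whose extreme points have a near-forest support; the $\omega$ partition constraints are the only global couplings, so the cycle space contributed by the non-partition tight constraints is controlled. Concretely, I would charge each independent non-partition tight constraint injectively to a \emph{private} column chosen among the $y$-variables and the $x_u$ with $u\in U_>$, never an $x_w$: edge-coverage rows and capacity rows at $W$-vertices pivot on an incident $y_{e,w}$, $y=x$ rows pivot on their $y_{e,w}$, and capacity rows at $u\in U_>$ pivot on $x_u$. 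An injective such assignment forces the rank bound $|P|+|U_>|$.

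The main obstacle is establishing that this injective charging (a system of distinct representatives avoiding the $x_W$ columns) always exists. The difficulty is exactly where Wong's pure vertex-cover counting \cite{Wong17-vc} breaks: the capacity constraint at a vertex $w\in W$ and the tight $y=x$ constraints incident to $w$ both compete for the same $y$-columns around $w$, and every $w\in W$ must be pinned by at least one of them (otherwise $x_w$ could be perturbed, contradicting extremality). I would rule out a Hall-type deficiency by using the linear independence of the selected tight rows together with the observation that only the $\omega$ partition constraints act globally---so any would-be deficient set of edge/capacity/$y=x$ rows either reveals a linear dependence (contradiction) or must absorb one of the $\le\omega$ partition rows, which is precisely the $+\omega$ slack appearing in the statement. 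Carrying this bookkeeping through the interaction of capacity and $y=x$ tightness at $W$-vertices is the step requiring the most care.
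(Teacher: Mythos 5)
Your overall framing is the same as the paper's: treat the terminal $(x^*,y^*)$ as an extreme point, equate the number of live variables with the number of linearly independent tight constraints, observe that the termination conditions leave only the $x_u\le 1$ box constraints (for $u\in U_1$) tight among the bounds on $x$, and then show that the tight non-partition structural constraints have rank at most $|P|+|U_>|$ so that the $\omega$ partition rows account for the entire surplus $|W|-|U_1|$. That target inequality is exactly right. However, the proof has a genuine gap precisely at the step you flag yourself: the existence of the injective charging (your system of distinct representatives avoiding the $x_W$ columns) is the entire content of the lemma, and you do not establish it. Your proposed escape hatch --- that any Hall-type deficiency ``either reveals a linear dependence or must absorb one of the $\le\omega$ partition rows'' --- does not work as stated, because the partition rows were already peeled off and credited with the $+\omega$ before the SDR argument begins; a deficiency among the remaining edge-coverage, capacity, and coupling rows cannot be charged to them a second time without redoing the accounting. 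Moreover, deficient configurations really do arise and must be killed by explicit dependence arguments: for instance, a vertex $w\in W$ with a single live incident assignment $y_{e,w}$ for which both the capacity row and the coupling row $y_{e,w}\le x_w$ are tight has two rows competing for one column, and one must \emph{prove} (using $x_w^*>0$) that these rows are linearly dependent rather than assume the collision resolves itself.

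The paper closes exactly this gap with two concrete local arguments that your sketch is missing. First, for each $w\in W$ and each live incident pair $(e,w)$, at most one of $y_{e,w}\ge 0$ and $y_{e,w}\le x_w$ can be tight (since $x_w^*>0$), and if all coupling rows at $w$ are tight then the capacity row at $w$ is a linear combination of them and hence redundant; this caps the independent rows attached to $w$ at $|\delta(w)\setminus T|$. Second, a per-edge contradiction shows that the coverage row of an edge $e$ with $|e\cap U_>|\ge 1$ cannot be tight simultaneously with all non-negativity rows $y_{e,u}\ge 0$, $u\in e\cap U_>$: tightness would force the coverage to come entirely from $W$-endpoints, each contributing strictly less than $1/f$, while the right-hand side of the coverage row is at least $|e\setminus D|/f$, yielding $|e\cap U_>|<0$. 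These two facts are what make your injective assignment of rows to columns go through; without them the counting bound $|P|+|U_>|$ is an assertion, not a proof. You should either supply these dependence arguments or an equivalent substitute before the lemma can be considered proved.
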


\begin{proof}
The proof is by a counting argument on the variables and constraints of the final LP-IR. As an extreme point solution, it is uniquely defined by a system of linear equations formed by setting a subset of the LP's inequality constraints to be tight. A fundamental property of such solutions is that the number of variables must equal the number of linearly independent tight constraints.

At termination, the set of undecided vertices is $V \setminus D = U_> \cup W$. The algorithm's termination conditions ensure that for any $u \in U_>$, we have $x_u^* > 1/f$ and $y^*_{e,u} < x_u^*$. For any $w \in W$, we have $0 < x_w^* < 1/f$.

The variables in the final LP are the $\{x_v\}$ for undecided vertices and the associated $\{y_{e,v}\}$ assignment variables. The total number of variables is $|U_>| + |W| + \sum_{e \in E \setminus T} |e \setminus D| 
    = |U_>| + |W| + \sum_{e \in E \setminus T} |e \cap U_>| + \sum_{e \in E \setminus T} |e \cap W|$.

Next, we establish an upper bound on the number of linearly independent constraints that can be tight.

\begin{itemize}
    \item \textbf{Partition Constraints:} At most $\omega$ can be tight.
    
    \item \textbf{Capacity Constraints for $U_>$:} At most $|U_>|$ can be tight.
    
    \item \textbf{Upper Bounds $x_u \le 1$:} By definition, at most $|U_1|$ can be tight.
    
        \item \textbf{Constraints involving $W$ Vertices:} This group includes the capacity constraint for each $w \in W$ ($\sum y_{e,w} \le k_w x_w$), the coupling constraints ($ y_{e,w} \le x_w$), and the non-negativity constraints ($y_{e,w} \ge 0$). The maximum number of constraints from this group is limited for two reasons. First, for any specific edge $e$, the constraints $y_{e,w} \ge 0$ and $y_{e,w} \le x_w$ cannot both be tight simultaneously. This is because $ x_w^* > 0$. Thus, for each edge, at most one of these two boundary constraints can be part of the basis. Second, the entire collection of constraints for vertex $w$ is linearly dependent. For a given vertex $w$, if the constraint corresponding to $y_{e,w}$ are chosen to be all tight, the tightness of the capacity constraint is a direct consequence and is therefore redundant; it cannot add a new linearly independent equation to the basis. Due to these two properties, the maximum number of independent tight constraints associated with any single vertex $w$ is $|\delta(w) \setminus T|$. Summing over all vertices in $W$ establishes the total contribution of at most $\sum_{w \in W} |\delta(w) \setminus T| = \sum_{e \notin T} |e \cap W|$ constraints to the basis.

    \item \textbf{Constraints involving $U_>$ Vertices (Per-Edge Analysis):} For each edge $e \in E \setminus T$, we analyze the relationship between its edge coverage constraint and its non-negativity constraints for $U_>$ endpoints.
    
    Let $C_e$ be the edge coverage constraint: $\sum_{v \in e \setminus D} y_{e,v} \le 1 - \sum_{v' \in e \cap D} \bar{y}_{e,v'}$. Let $\{N_{e,u}\}$ be the set of non-negativity constraints: $y_{e,u} \ge 0$ for all $u \in e \cap U_>$.
    
    We claim that for any edge $e$ with at least one endpoint in $U_>$, it is impossible for $C_e$ and all constraints in $\{N_{e,u}\}$ to be tight simultaneously.
    
    \textit{Proof of Claim:} Assume for contradiction that for an edge $e$ where $|e \cap U_>| \ge 1$, constraint $C_e$ and all constraints $N_{e,u}$ are tight.
    The tightness of all $N_{e,u}$ implies $y_{e,u} = 0$ for all $u \in e \cap U_>$. Thus, the edge's total coverage comes only from $W$ vertices: $\sum_{v \in e \setminus D} y_{e,v} = \sum_{w \in e \cap W} y_{e,w}$. Since $y_{e,w} < 1/f$ for any $w \in W$, we have a strict upper bound: $\sum_{v \in e \setminus D} y_{e,v} < |e \cap W|/f$.
    The tightness of $C_e$ means $\sum_{v \in e \setminus D} y_{e,v} = 1 - \sum_{v' \in e \cap D} \bar{y}_{e,v'}$. This right-hand side has a lower bound of $|e \setminus D|/f$.
    Combining these gives $|e \setminus D|/f \le \sum_{v \in e \setminus D} y_{e,v} < |e \cap W|/f$. This implies $|e \cap U_>| + |e \cap W| < |e \cap W|$, which simplifies to $|e \cap U_>| < 0$, a contradiction.
    
    This claim proves that for each edge $e$, the set of $1 + |e \cap U_>|$ constraints (consisting of $C_e$ and $\{N_{e,u}\}$) are linearly dependent when all are tight. Therefore, at most $|e \cap U_>|$ of them can be chosen for a linearly independent basis. The total number of tight constraints from this group, summed over all edges, is at most $\sum_{e \in E \setminus T} |e \cap U_>|$.
\end{itemize}

By equating the number of variables to the upper bound on the number of tight, linearly independent constraints, we have: 
\[
|U_>| + |W| + \sum_{e \in E \setminus T} |e \cap U_>| + \sum_{e \in E \setminus T} |e \cap W| \le \omega + |U_>| + |U_1| + \sum_{e \in E \setminus T} |e \cap W| + \sum_{e \in E \setminus T} |e \cap U_>|. 
\]
Canceling the common terms from both sides yields the desired result.
\end{proof} \qed

Lemmas \ref{lem:fes}, \ref{lem:cost1}, and \ref{lem:count} can now be used along with enumeration through all multisets of size bounded by \(1/\epsilon\) to achieve \Cref{thm:CVC2}. 

\begin{proof}

The feasibility of the integral solution $\bar{x}$ produced by the algorithm is guaranteed by \Cref{lem:fes}. We now bound its cost.

Let $OPT_{LP}$ be the optimal value of the initial LP relaxation. Let $S_L$ be the set of vertices finalized during the iterative loop and $S_F = U_> \cup W$ be the set of vertices rounded at the end. An accounting argument based on \Cref{lem:cost1} shows that the cost from vertices rounded during the loop, $|S_L|$, plus $f$ times the cost of the final LP, $f \cdot \text{OPT}_{final}$, is at most $f \cdot OPT_{LP}$.
$$|S_L| + f \cdot \text{OPT}_{final} \le f \cdot OPT_{LP}$$
The cost from the final rounding step is $\text{Cost}(S_F) = \sum_{v \in U_> \cup W} \lceil x_v^* \rceil = |U_>| + |W|$. By \Cref{lem:count}, this is bounded by $|U_>| + |U_1| + \omega$. For any $u \in U_>$, we have $1 < f \cdot x_u^*$. This is sufficient to show that $|U_>| + |U_1| \le f \cdot \sum_{u \in U_>} x_u^* \le f \cdot \text{OPT}_{final}$ (for $f \ge 2$). Therefore, the cost of the final step is bounded as:
$$\text{Cost}(S_F) = |U_>| + |W| \le f \cdot \text{OPT}_{final} + \omega$$
Combining the costs from both phases gives the total cost:
$$\text{Cost}(\bar{x}) = |S_L| + \text{Cost}(S_F) \le |S_L| + f \cdot \text{OPT}_{final} + \omega$$
Since $|S_L| + f \cdot \text{OPT}_{final} \le f \cdot OPT_{LP}$, we have a solution with a total cost bounded by $f \cdot OPT_{LP} + \omega$.

We enumerate all possible solutions of size $\kappa = 1, 2, \dots, \lceil \omega/\epsilon \rceil$. This step takes $n^{O(\omega/\epsilon)}$ time. If a feasible solution is found, the first one encountered must be an optimal solution, since we are checking solution sizes in increasing order, and we are done.

Otherwise, if this process is completed without finding a feasible solution, we know that the optimal integral solution, $OPT$, has a cost greater than $\omega/\epsilon$. In this case, we run our iterative rounding algorithm. By the analysis above, it returns a feasible solution with a cost bounded by:
$$\text{Cost}(\bar{x}) \le f \cdot OPT_{LP} + \omega$$
Since $OPT_{LP} \le OPT$ and we have established that $OPT > \omega/\epsilon$ (which implies $\omega < \epsilon \cdot OPT$), we can bound the cost of our solution:
$$\text{Cost}(\bar{x}) < f \cdot OPT + \epsilon \cdot OPT = (f + \epsilon) \cdot OPT$$
This provides an $(f+\epsilon)$-approximation for the problem.
\end{proof}

\section{Weighted Partition Edge Cover}
\label{sec:w-pec}
We consider the following generalization from \cite{colorful-vertex-edge}:

\begin{definition}[Weighted Budgeted Matching]
We are given a graph \(G=\lrp{V,E}\) with a weight function \(\we:E\to\R_{\geq0}\), a partition of the vertex set \( V = V_1 \cup \dots \cup V_\omega \), and a parameter \( \rho_g \) for each group. We want to find a minimum-weight matching such that at least \( \rho_g \) vertices from group \( V_g \) are matched.
\end{definition}

First, we note that the reduction between Partition-EC and  Budgeted Matching in Lemma 5 of \cite{colorful-vertex-edge} holds in the weighted case. 

\begin{lemma}\label{lem:PEC-WBM}
If Weighted Budgeted Matching can be solved in time \(T\lrp{n,m}\), then Partition-EC can be solved in time \(T\lrp{2n,m+n}+O\lrp{m+n}\). Here $n$ and $m$ refer to the number of nodes and edges respectively.
\end{lemma}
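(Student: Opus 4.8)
The plan is to lift the classical reduction from minimum-weight edge cover to matching, augmenting it with a coverage-preserving gadget so that the partition thresholds survive unchanged. First I would build the Weighted Budgeted Matching instance \(G'=\lrp{V',E'}\) as follows. Duplicate every vertex: set \(V'=V\cup\tilde V\) with \(\tilde V=\lrs{\tilde v:v\in V}\), so \(\lrm{V'}=2n\). Keep every original edge \(e\in E\) in \(E'\) with its weight \(\we(e)\), and for each \(v\in V\) add a single auxiliary edge \(\lrp{v,\tilde v}\) of weight \(c_v:=\min_{e\ni v}\we(e)\), the cost of covering \(v\) by its cheapest incident edge; this gives \(\lrm{E'}=m+n\). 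The group structure is inherited: vertex \(v\) stays in the group it had in \(G\), while every auxiliary vertex \(\tilde v\) is placed in no constrained group. Computing all \(c_v\) and assembling \(G'\) takes \(O(m+n)\) time.

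The correspondence I would prove is that the matched original vertices of any matching in \(G'\) play exactly the role of the covered vertices of an edge cover in \(G\), at equal cost up to the inequalities below; since the thresholds ``at least \(\rho_g\) vertices in \(V_g\)'' are stated on the original vertices only, feasibility transfers verbatim in both directions. For the direction from matching to cover, given a matching \(M\) in \(G'\) I keep its original edges and replace each auxiliary edge \(\lrp{v,\tilde v}\in M\) by a cheapest incident edge of \(v\) in \(G\) (of the same weight \(c_v\)). The resulting edge set \(F\) covers every original vertex matched by \(M\)—possibly more—and satisfies \(\we(F)\le\we(M)\). For the converse, given an edge cover \(F\) covering a set \(S\), I extract a maximum matching \(M_0\) inside the subgraph \(\lrp{V,F}\) and let \(R\) be the \(F\)-covered vertices left unmatched by \(M_0\); matching each \(v\in R\) through its auxiliary edge yields a matching \(M=M_0\cup\lrs{\lrp{v,\tilde v}:v\in R}\) of \(G'\) whose matched original vertices are exactly \(S\).

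The only nontrivial point is the weight estimate \(\we(M)\le\we(F)\) in this last step, which is where I expect the real work to sit. The bound \(\we(M_0)\le\we(F)\) is immediate from \(M_0\subseteq F\), so it remains to charge \(\sum_{v\in R}c_v\) to \(F\setminus M_0\). Because \(M_0\) is a maximum matching in \(\lrp{V,F}\), no edge of \(F\) can join two unmatched vertices; hence every \(v\in R\) has a witness edge \(f_v\in F\setminus M_0\) joining it to an \(M_0\)-matched vertex, and distinct vertices of \(R\) receive distinct witnesses (a shared witness would join two unmatched endpoints, a contradiction). Injectivity of the map \(v\mapsto f_v\) together with \(c_v\le\we(f_v)\) gives \(\sum_{v\in R}c_v\le\we(F\setminus M_0)\), whence \(\we(M)\le\we(F)\). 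Combining the two directions shows the optimal values of the two problems coincide. The algorithm then runs as claimed: build \(G'\) in \(O(m+n)\), solve Weighted Budgeted Matching in \(T\lrp{2n,m+n}\), and translate the resulting matching back to an edge cover in \(O(m+n)\), for a total of \(T\lrp{2n,m+n}+O\lrp{m+n}\).
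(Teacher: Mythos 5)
Your construction is exactly the paper's: duplicate each vertex $v$ with an auxiliary mate $\tilde v$ joined by an edge of weight $\min_{e\ni v}\we_e$, yielding a Weighted Budgeted Matching instance on $2n$ vertices and $m+n$ edges, which is precisely the modification of Lemma~5 of \cite{colorful-vertex-edge} that the paper invokes. Your proof is correct and simply fills in the correspondence argument (in particular the maximum-matching/witness-edge charging for the cover-to-matching direction) that the paper leaves to the cited reference.
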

In order to be matched or covered at minimum cost, a vertex will pay for its least expensive adjacent edge, so we can modify the reduction by giving the edge between each vertex and its auxillary vertex this minimum adjacent weight. Moreover, we eliminate the dependence on \(\omega\).

Now, we give a simple reduction to the standard problem of finding a maximum weight matching in a graph. This allows us to generalize the problem to weighted graphs, improve the running time, and avoid the reduction to tropical matching completely. Our algorithm also admits the use of standard libraries for weighted matching.

\begin{enumerate}
\item Let \(M\) be large with \(M> \sum_e \we_e \). 
\item Let $H$ be a copy of $G$ with the following vertex sets added. For each color $g$ add a set $V'_g$ of $|V_g|-\rho_g$ nodes. Add edges between all nodes in $V'_g$ and $V_g$ for each $g$. The weight of an edge $e$ in $H$ is $2M-\we_e$ when $e \in E$, and the weight of an edge $(x,y)$ where $x \in V'_g, y \in V_g$ is $M$.
\item Find a maximum weight matching \( \mathcal{M}_H \) in $H$ using the algorithm by Gabow \cite{gabow-90,Gabow18} or the linear time approximation by Duan et al \cite{duan-lin-MWM-18}. The matched edges induced by the nodes of $G$ should be a minimum weight matching satisfying the group requirements. If the weight of the maximum weight matching is at most $(n-1)M$ then no solution exists.
\end{enumerate}

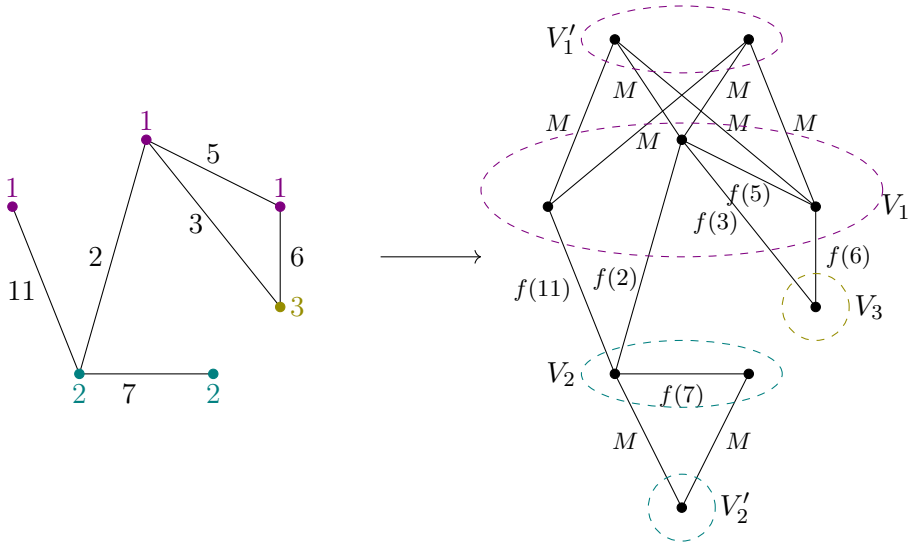
\begin{figure}[h]
    \centering
    \begin{tikzpicture}[scale=0.89]

\coordinate (b1) at (-2,0);
\coordinate (b2) at (0,1);
\coordinate (b3) at (2,0);
\coordinate (g1) at (-1,-2.5);
\coordinate (g2) at (1,-2.5);
\coordinate (r1) at (2,-1.5);

\draw (b1) -- (g1) node[midway, left] {11};
\draw (g1) -- (b2) node[midway, left] {2};
\draw (b2) -- (b3) node[midway, above] {5};
\draw (g1) -- (g2) node[midway, below left] {7};
\draw (b3) -- (r1) node[midway, right] {6};
\draw (b2) -- (r1) node[midway, left] {3};

\filldraw[color=violet] (b1) circle (2pt) node[above] {1};
\filldraw[color=violet] (b2) circle (2pt) node[above] {1};
\filldraw[color=violet] (b3) circle (2pt) node[above] {1};
\filldraw[color=teal] (g1) circle (2pt) node[below] {2};
\filldraw[color=teal] (g2) circle (2pt) node[below] {2};
\filldraw[color=olive] (r1) circle (2pt) node[right] {3};

\draw[->] ($(b3)!0.5!(r1)+(1.5,0)$) -- ($(b3)!0.5!(r1)+(3,0)$);

\coordinate (adj) at (8,0);
\coordinate (mb1) at ($(b1)+(adj)$);
\coordinate (mb2) at ($(b2)+(adj)$);
\coordinate (mb3) at ($(b3)+(adj)$);
\coordinate (mg1) at ($(g1)+(adj)$);
\coordinate (mg2) at ($(g2)+(adj)$);
\coordinate (mr1) at ($(r1)+(adj)$);

\coordinate (sb1) at ($(mb1)!0.5!(mb2)+(0,2)$);
\coordinate (sb2) at ($(mb3)!0.5!(mb2)+(0,2)$);
\coordinate (sg1) at ($(mg1)!0.5!(mg2)-(0,2)$);

\draw (mb1) -- (mg1) node[midway, left] {\footnotesize $f(11)$};
\draw (mg1) -- (mb2) node[midway, below left] {\footnotesize $f(2)$};
\draw (mb2) -- (mb3) node[midway, below] {\footnotesize $f(5)$};
\draw (mg1) -- (mg2) node[midway, below] {\footnotesize $f(7)$};
\draw (mb3) -- (mr1) node[midway, right] {\footnotesize $f(6)$};
\draw (mb2) -- (mr1) node[midway, left] {\footnotesize $f(3)$};

\draw (mb1) -- (sb1) node[midway, left] {\footnotesize $M$};
\draw (mb2) -- (sb1) node[midway, left] {\footnotesize $M$};
\draw (mb3) -- (sb1) node[midway, right] {\footnotesize $M$};
\draw (mb1) -- (sb2) node[midway, below] {\footnotesize $M$};
\draw (mb2) -- (sb2) node[midway, right] {\footnotesize $M$};
\draw (mb3) -- (sb2) node[midway, right] {\footnotesize $M$};

\draw (mg1) -- (sg1) node[midway, left] {\footnotesize $M$};
\draw (mg2) -- (sg1) node[midway, right] {\footnotesize $M$};

\draw[color=violet,dashed] ($(sb1)!0.5!(sb2)$) ellipse (1.5 and 0.5);
\draw[color=violet,dashed] ($(mb1)!0.5!(mb3)+(0,0.25)$) ellipse (3 and 1);
\draw[color=teal,dashed] ($(mg1)!0.5!(mg2)$) ellipse (1.5 and 0.5);
\draw[color=teal,dashed] (sg1) circle (0.5);
\draw[color=olive,dashed] (mr1) circle (0.5);

\filldraw (mb1) circle (2pt);
\filldraw (mb2) circle (2pt);
\filldraw (mb3) circle (2pt);
\filldraw (mg1) circle (2pt);
\filldraw (mg2) circle (2pt);
\filldraw (mr1) circle (2pt);
\filldraw (sb1) circle (2pt);
\filldraw (sb2) circle (2pt);
\filldraw (sg1) circle (2pt);
\node at ($(sb1)-(0.8,0)$) {\(V'_1\)};
\node at ($(mb3)+(1.2,0)$) {\(V_1\)};
\node at ($(mg1)-(0.8,0)$) {\(V_2\)};
\node at ($(sg1)+(0.8,0)$) {\(V'_2\)};
\node at ($(mr1)+(0.8,0)$) {\(V_3\)};

\end{tikzpicture}
    \caption{The reduction from Weighted Budgeted Matching to Maximum Matching for an instance with \(\rho_1=\rho_2=\rho_3=1\). Here $f(x)=2M-x$.}
    \label{fig:reduction}
\end{figure}

\begin{lemma}\label{lem:min-max}
If maximum-weight matching can be solved in time \(T\lrp{n,m}\), then Weighted Budgeted Matching can be solved in time \(T\lrp{2n,m+n^2}\).
\end{lemma}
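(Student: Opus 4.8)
The plan is to show that the construction of $H$ sets up a weight-preserving correspondence between feasible Weighted Budgeted Matchings in $G$ and the matchings of $H$ that cover every original vertex, and then to read off the running time from the size of $H$. Throughout I would write $n=|V|$, take an arbitrary matching $\mathcal{M}_H$ in $H$, and split its edges into the \emph{real} edges $R\subseteq E$ (each of weight $2M-\we_e$) and the \emph{dummy} edges $Dm$ (each of weight $M$, joining some $V_g$ to $V'_g$). Since a real edge covers two vertices of $V$ and a dummy edge covers exactly one, the number of vertices of $V$ covered by $\mathcal{M}_H$ is $2|R|+|Dm|$, and its total weight is $M\,(2|R|+|Dm|)-\sum_{e\in R}\we_e$.

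The first step is to exploit the large constant $M$. Because $M>\sum_{e\in E}\we_e\ge\sum_{e\in R}\we_e\ge 0$, the weight of $\mathcal{M}_H$ is governed by the leading term $M\,(2|R|+|Dm|)$, i.e.\ by $M$ times the number of covered $V$-vertices. Hence any matching covering all $n$ vertices of $V$ has weight $Mn-\sum_{e\in R}\we_e>(n-1)M$, while any matching that misses a vertex of $V$ covers at most $n-1$ of them and so has weight at most $(n-1)M$. I would use this strict separation to conclude that a maximum-weight matching in $H$ covers all of $V$ precisely when some fully covering matching exists, which also justifies the reported infeasibility test (max weight $\le (n-1)M$).

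The second step is the feasibility/optimality dictionary. If $\mathcal{M}_H$ covers all of $V$, then within each group at most $|V'_g|=|V_g|-\rho_g$ vertices of $V_g$ are matched to dummy nodes, so at least $\rho_g$ vertices of $V_g$ are covered by the real edges $R$; thus $R$ is a feasible budgeted matching. Conversely, a feasible matching $\mathcal{M}$ in $G$ leaves at most $|V_g|-\rho_g$ vertices of each $V_g$ uncovered, and these can be matched to distinct nodes of $V'_g$, extending $\mathcal{M}$ to a full-coverage matching of $H$. Under this correspondence the covered count is pinned at $2|R|+|Dm|=n$, so the weight collapses to exactly $Mn-\sum_{e\in R}\we_e$; maximizing it is therefore the same as minimizing $\sum_{e\in R}\we_e$ over feasible budgeted matchings. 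Consequently the real edges of a maximum-weight matching in $H$ form a minimum-weight Weighted Budgeted Matching.

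Finally I would bound the size of $H$. Its vertex set is $V\cup\bigcup_g V'_g$, of size $n+\sum_g(|V_g|-\rho_g)\le 2n$, and its edge set consists of the $m$ original edges together with the dummy edges, numbering $\sum_g |V_g|(|V_g|-\rho_g)\le\bigl(\sum_g|V_g|\bigr)^2=n^2$; building $H$ costs $O(m+n^2)$, which is absorbed into the matching call $T(2n,m+n^2)$. I expect the only delicate point to be the weight bookkeeping in the second step, namely verifying that full coverage forces $2|R|+|Dm|=n$ and hence makes the objective reduce cleanly to $Mn-\sum_{e\in R}\we_e$; once the $M$-separation of the first step is established, the feasibility equivalence and the size accounting are routine.
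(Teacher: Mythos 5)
Your proof is correct and follows essentially the same route as the paper: the same construction of \(H\), the same use of the large constant \(M\) to force full coverage of \(V\) (and to detect infeasibility via the \((n-1)M\) threshold), the same feasibility and optimality arguments as the paper's Lemmas~\ref{lem:feasibility} and~\ref{lem:optimality}, and the same size accounting giving \(T(2n, m+n^2)\). The only (cosmetic) difference is that you unify feasibility, infeasibility detection, and optimality through the single accounting identity that a matching's weight equals \(M\) times the number of covered \(V\)-vertices minus \(\sum_{e\in R}\we_e\), which is a clean way to package what the paper proves in two separate lemmas.
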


\begin{lemma}
\label{lem:feasibility}
In \( \mathcal{M}_H \), at least \( \rho_g \) vertices from each group \( V_g \) are matched by edges from \( G \).
\end{lemma}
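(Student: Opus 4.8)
The plan is to exploit the choice $M > \sum_e \we_e$, which makes the large ``$M$-part'' of each edge weight dominate over the perturbation $\we_e$. The first step is to rewrite the weight of an arbitrary matching of $H$ in a revealing form. If a matching uses $k$ edges of $G$ and $\ell$ auxiliary edges, its weight is $2Mk - P + \ell M$, where $P$ is the sum of the original weights $\we_e$ over the used $G$-edges, so $0 \le P < M$. Since every $G$-edge matches two vertices of $V$ while every auxiliary edge matches exactly one vertex of $V$ (its other endpoint lies in some $V'_g$), the quantity $2k + \ell$ equals precisely the number of vertices of $V$ that the matching covers. Hence the weight equals $M\cdot(\#\text{matched }V\text{-vertices}) - P$ with $P < M$, and I would first show that maximizing the weight therefore forces the matching to maximize the number of matched vertices of $V$: covering one additional vertex of $V$ gains at least $M$, which cannot be offset by the term $-P$ living in $[0,M)$.

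Next I would connect ``all of $V$ is matched'' to feasibility of the budgeted instance. Given a feasible budgeted matching in $G$ (covering at least $\rho_g$ vertices of each $V_g$), I can extend it to a matching of $H$ covering every vertex of $V$: each group leaves at most $|V_g| - \rho_g = |V'_g|$ uncovered vertices, which can be matched injectively into $V'_g$. This exhibits a matching of $H$ covering all $n$ vertices of $V$, whose weight exceeds $(n-1)M$; conversely, any matching missing a vertex of $V$ has $2k+\ell \le n-1$ and hence weight at most $(n-1)M$. Thus, precisely when a solution exists, the maximum-weight matching $\mathcal{M}_H$ covers every vertex of $V$ — which is exactly the threshold test flagged in Step~3 of the reduction.

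With $\mathcal{M}_H$ covering all of $V$, the conclusion is a counting step applied group by group. In group $g$, every vertex of $V_g$ is matched either by a $G$-edge or by an auxiliary edge into $V'_g$; since $|V'_g| = |V_g| - \rho_g$ and the auxiliary edges incident to $V'_g$ belong to a matching, at most $|V_g| - \rho_g$ vertices of $V_g$ can be matched into $V'_g$. The remaining at least $\rho_g$ vertices of $V_g$ must therefore be matched by edges of $G$, which is the claim.

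The step I expect to be the main obstacle is ruling out the ``stuck'' configuration in which some $y \in V_g$ is left unmatched while all of $V'_g$ is already used up: purely local exchanges only rotate which $V_g$-vertex is unmatched and do not increase coverage. The clean way around this is the equivalence established above — such a configuration can arise only when the maximum number of $V$-vertices matchable in $H$ is strictly less than $n$, which is exactly the infeasible case detected by the weight threshold $(n-1)M$. So the argument hinges on first reducing the weight objective to maximizing the number of matched $V$-vertices, after which feasibility (equivalently, weight $> (n-1)M$) guarantees full coverage of $V$ and the per-group count goes through without any augmenting-path case analysis.
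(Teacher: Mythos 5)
Your proof is correct and follows essentially the same route as the paper: extend a feasible budgeted matching to one covering all of $V$, observe that the weight structure forces the maximum-weight matching in $H$ to also cover all of $V$, and then use $|V'_g| = |V_g| - \rho_g$ to conclude at least $\rho_g$ vertices of $V_g$ are matched by $G$-edges. Your explicit decomposition of the weight as $M \cdot (\#\text{matched } V\text{-vertices}) - P$ with $0 \le P < M$ is a cleaner, fully quantified version of the paper's ``for large $M$'' argument, but it is the same idea.
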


\begin{proof}
Let \( \mathcal{M}^* \subseteq G \) be an optimal solution to the budgeted matching problem. We now construct a matching \( \mathcal{M}_H^* \) in \( H \) by including all edges from \( \mathcal{M}^* \), and  for each vertex in \( V_g \) not matched in \( \mathcal{M}^* \), adding an auxiliary edge to a unique free node in \( V'_g \).

Let \( k_g^* \) be the number of vertices in \( V_g \) matched by \( \mathcal{M}^* \). Since \( k_g^* \ge \rho_g \), we have, \(|V_g| - k_g^* \le |V_g| - \rho_g = |V'_g|\), so there are enough auxiliary vertices in \( V'_g \) to complete the matching without conflict. Thus, \( \mathcal{M}_H^* \) is a feasible matching in \( H \) that matches all the vertices in $V_g$ for every $g$. This also ensures that the cost of the maximum matching in $H$ is strictly greater than $(n-1)M$ since each matched node in $V$ contributes $M$ to the weight of the matching, and even after subtracting the weight of a few edges we have a solution of weight between $nM$ and $(n-1)M$. Any matching that does not match all nodes of $G$ (that also belong to $H$) cannot have weight larger than $(n-1)M$.

Now consider the maximum weight matching \( \mathcal{M}_H \) in \( H \). Suppose, for contradiction, that \( \mathcal{M}_H \) matches fewer than \( \rho_g \) vertices from some group \( V_g \) via edges from \( G \). The number of vertices matched by auxiliary edges is at most \( |V'_g| = |V_g| - \rho_g \). Therefore, at least one vertex \( u \in V_g \) must be unmatched by \( \mathcal{M}_H \). Therefore, for large $M$, \( \mathcal{M}^*_H \) is better than \( \mathcal{M}_H \), which is a contradiction.
\end{proof}

\begin{lemma}
\label{lem:optimality}
The matching $\mathcal{M} = \mathcal{M}_H \cap E(G)$ with original edge weights has total weight equal to the optimum of the Weighted Budgeted Matching problem.
\end{lemma}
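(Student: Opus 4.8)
The plan is to exploit the arithmetic built into the edge weights of $H$, which is designed so that each vertex of $G$ saturated by a matching contributes a fixed base weight of $M$, while the original edge weights appear only as a correction term. First I would argue that the maximum weight matching $\mathcal{M}_H$ saturates every vertex of $G$. This is already essentially contained in the proof of \Cref{lem:feasibility}: the feasible extension $\mathcal{M}_H^*$ of an optimal budgeted matching has weight strictly larger than $(n-1)M$, whereas any matching leaving even one vertex of $G$ unsaturated saturates at most $n-1$ of them and hence has weight at most $(n-1)M$ (each saturated $G$-vertex contributes at most $M$ to the base weight, and the original weights are only subtracted, since $M > \sum_e \we_e$). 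Therefore $\mathcal{M}_H$ must saturate all $n$ vertices of $G$.

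Next I would compute the $H$-weight $W_H(\mathcal{M}_H)$ exactly. Writing $\mathcal{M} = \mathcal{M}_H \cap E(G)$ and letting $A$ denote the auxiliary edges of $\mathcal{M}_H$, each edge of $\mathcal{M}$ has weight $2M-\we_e$ and saturates two $G$-vertices, while each edge of $A$ has weight $M$ and saturates exactly one $G$-vertex (its endpoint in some $V_g$). Since all $n$ vertices of $G$ are saturated, $2|\mathcal{M}| + |A| = n$, so the base weights telescope and
\[
W_H(\mathcal{M}_H) = 2M|\mathcal{M}| + M|A| - \sum_{e\in\mathcal{M}}\we_e = Mn - \we(\mathcal{M}),
\]
where $\we(\mathcal{M}) = \sum_{e\in\mathcal{M}}\we_e$. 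Thus maximizing the $H$-weight over matchings that saturate $G$ is exactly the same as minimizing $\we(\mathcal{M})$.

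It then remains to identify $\we(\mathcal{M})$ with the optimum $\OP$ of Weighted Budgeted Matching. By \Cref{lem:feasibility}, $\mathcal{M}$ is itself a feasible budgeted matching, so $\we(\mathcal{M}) \ge \OP$. For the reverse inequality I would take an optimal budgeted matching $\mathcal{M}^*$ and extend it to a matching $\mathcal{M}_H^*$ of $H$ exactly as in the proof of \Cref{lem:feasibility}, pairing each vertex of $V_g$ left unsaturated by $\mathcal{M}^*$ with a distinct node of $V'_g$; the counts $|V'_g| = |V_g| - \rho_g$ guarantee enough auxiliary vertices. By the same weight count, $W_H(\mathcal{M}_H^*) = Mn - \OP$. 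Since $\mathcal{M}_H$ is a maximum weight matching, $Mn - \we(\mathcal{M}) = W_H(\mathcal{M}_H) \ge W_H(\mathcal{M}_H^*) = Mn - \OP$, giving $\we(\mathcal{M}) \le \OP$. Combining the two bounds yields $\we(\mathcal{M}) = \OP$, as claimed.

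The step that demands the most care is the saturation argument together with its weight bookkeeping: I must make sure the choice $M > \sum_e \we_e$ genuinely forces $\mathcal{M}_H$ to saturate all of $G$, so that the clean identity $W_H(\mathcal{M}_H) = Mn - \we(\mathcal{M})$ applies to the actual maximum weight matching and not merely to saturating ones, and I must confirm that $|V'_g| = |V_g| - \rho_g$ is precisely the number of auxiliary vertices needed to realize the extension in both directions. Once saturation is secured, the remainder is the algebraic identity above.
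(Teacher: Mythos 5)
Your proof is correct and follows essentially the same route as the paper's: extend an optimal budgeted matching to a perfect-on-$V$ matching $\mathcal{M}_H^*$ of weight $nM-\OP$, observe that $\mathcal{M}_H$ also saturates all of $V$ so its weight is $nM-\we(\mathcal{M})$, and combine maximality with the feasibility of $\mathcal{M}$ from \Cref{lem:feasibility} to get equality. Your explicit justification that $\mathcal{M}_H$ must saturate all $n$ vertices of $G$ (via the $(n-1)M$ threshold) is a point the paper leaves slightly more implicit, but the argument is the same.
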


\begin{proof}
Let $\mathcal{M}^* \subseteq G$ be an optimal solution to the budgeted matching problem with total weight $\text{OPT}$. As in proof of Lemma~\ref{lem:feasibility}, we can extend $\mathcal{M}^*$ to a matching $\mathcal{M}_H^*$ in $H$ by adding auxiliary edges of weight $M$ to unmatched vertices in $V$. Since all $n = |V|$ vertices in $V$ are matched in $\mathcal{M}_H^*$, and exactly $n - 2|\mathcal{M}^*|$ auxiliary edges are used, the total weight of $\mathcal{M}_H^*$ is:
\[
\sum_{e \in \mathcal{M}_H^*} \weh_e = \sum_{e \in \mathcal{M}^*}(2M - \we_e) + (n - 2|\mathcal{M}^*|)M = nM - \text{OPT}.
\]

Let $\mathcal{M}_H$ be a maximum weight matching in $H$, and let $\mathcal{M} = \mathcal{M}_H \cap E(G)$. 
Note that $\mathcal{M}_H$ also matches all $n$ vertices in $V$. Therefore,
\[
\sum_{e \in \mathcal{M}_H} \weh_e= \sum_{e \in \mathcal{M}}(2M - \we_e) + (n - 2|\mathcal{M}|)M = nM - \sum_{e \in \mathcal{M}} \we_e.
\]

Since $\mathcal{M}_H$ maximizes the weight,
\[
\sum_{e \in \mathcal{M}_H} \weh_e \ge \sum_{e \in \mathcal{M}_H^*} \weh_e  \Rightarrow \sum_{e \in \mathcal{M}} \we_e \le \text{OPT}.
\]

By Lemma~\ref{lem:feasibility}, $\mathcal{M}$ is a feasible solution to the budgeted matching problem, so equality must hold and $\mathcal{M}$ is optimal.
\end{proof}

To speed up the algorithm, we do a reduction to the problem of finding a max weight subgraph with upper bounds on degrees ($f$-factor). We contract the set $V'_g$ into a single node $V'_g$ with edges to all vertices in $V_g$ of weight $M$ and give a degree constraint of $|V_g|-\rho_g$ to $V'_g$. All nodes of $V$ will have a degree constraint of 1.
This new graph has only $n+\omega$ vertices, and at most an extra $n$ edge, preserving the sparsity of the graph. We now use the algorithm by Gabow \cite{Gabow18} (see also \cite{GabowS21a}) which gives $O(nm+n^2 \log n)$ running time.

Lemmas \ref{lem:PEC-WBM}, \ref{lem:min-max}, \ref{lem:feasibility} and \ref{lem:optimality} together imply \Cref{thm:EC}.

\section{Conclusion}
While our algorithm achieves an $f$-approximation for the hard-capacitated problem, the algorithm for the weighted soft-capacitated (SC-PVC) variant has an approximation factor of $f+1$. Bridging this gap by designing an $f$-approximation algorithm for the weighted SC-PVC problem remains an intriguing open question. 

Another interesting direction for future research is to develop an efficient combinatorial algorithm for the Partition Vertex Cover problem, even in the uncapacitated case. While such primal-dual methods exist for the related Partial Vertex Cover problem \cite{Mestre09-VC}, developing one for the partition variant is an open question.

\textbf{Acknowledgements:} We would like to thank Amol Deshpande, Hal Gabow, Tanmay Inamdar, Julian Mestre and Emily Pitler for useful discussions.

\printbibliography

\end{document}